\def\ps@pprintTitle{%
 \let\@oddhead\@empty
 \let\@evenhead\@empty
 \def\@oddfoot{\centerline{\thepage}}%
 \let\@evenfoot\@oddfoot}
\newcommand{\mb}{\mathbb}
\newcommand{\mc}{\mathcal}
\newcommand{\br}{\breve}
\newcommand{\eps}{\epsilon}
\newcommand{\mbR}{{\mathds{R}}}
\newcommand{\Id}{{\mathbb{I}}}
\newcommand{\mcD}[1]{{\mathcal{D}#1}}
\newcommand{\mfN}{{\mathfrak{N}}}
\newcommand{\mfK}{{\mathfrak{K} }}
\newcommand{\mfT}{{\mathfrak{T} }}
\newcommand{\hQ}{{\widehat{Q}}}
\newcommand{\hH}{{\widehat{H}}}
\newtheorem{theorem}{Theorem}
\newtheorem{definition}{Definition}
\newtheorem{remark}{Remark}
\newtheorem{assumption}{Assumption}
\tikzstyle{block1} = [rectangle, draw, thick,fill=blue!10, text width=4.5em, text centered, rounded corners, minimum height=2em]
\tikzstyle{block2} = [rectangle, draw, thick,fill=blue!10, text width=2em, text centered, rounded corners, minimum height=2em]
\tikzstyle{line} = [draw, -latex']
\begin{document}

\begin{frontmatter}




\title{{\normalsize\hspace{0cm}\vspace{-3cm}**To Appear in \textit{Systems \& Control Letters}**}\vspace{4cm}\\Convex Analysis for LQG Systems with Applications to Major Minor LQG Mean-Field Game Systems \footnote{DF would like to acknowledge the support of Fonds de Recherche du Qu\'ebec -- Nature et Technologies (FRQNT). SJ would like to acknowledge the support of the Natural Sciences and Engineering Research Council of Canada
(NSERC), [funding reference numbers RGPIN-2018-05705 and RGPAS-2018-522715]. PEC would like to acknowledge the support of NSERC.}}

\author[dena]{Dena Firoozi}
\author[seb]{Sebastian Jaimungal}
\author[peter]{Peter E. Caines}
\address[dena]{Department of Decision Sciences, HEC Montr\'{e}al, Montreal, QC, Canada\\ (email: dena.firoozi@hec.ca)}
\address[seb]{Department of Statistical Sciences, University of Toronto, Toronto, ON, Canada, (email: sebastian.jaimungal@utoronto.ca)}
\address[peter]{Centre for Intelligent Machines (CIM) and the Department of Electrical and Computer Engineering (ECE), McGill University, Montreal, QC, Canada, \\(email: {peterc@cim.mcgill.ca})}

\begin{abstract}
We develop a convex analysis approach for solving LQG optimal control problems and apply it to major-minor (MM) LQG mean-field game (MFG) systems. The approach retrieves the best response strategies for the major agent and all minor agents that attain an $\epsilon$-Nash equilibrium. An important and distinctive advantage to this approach is that unlike the classical approach in the literature, we are able to avoid imposing assumptions on the evolution of the mean-field. In particular, this provides a tool for dealing with complex and non-standard systems. 
\end{abstract}



\begin{keyword}
  convex analysis \sep LQG systems \sep major-minor mean-field games
\end{keyword}
 \end{frontmatter}

\section{Introduction}
Various approaches such as calculus of variations, stochastic maximum principle, dynamic programming, square completion and change of functional have been used to address deterministic linear quadratic (LQ) and stochastic linear quadratic (LQG) optimal control problems \cite{Kirk1971, XYZBook1999, bensoussan1992, Anderson1990}.
The convex analysis approach to optimization for static systems, instead, uses the criteria of a vanishing G\^ateaux derivative of  the optimization functional (see e.g., \cite{ConvecAnalysisBook1999}, \cite{Allaire2007}) to seek optimality. \cite{CarmonaBSDEsBook2016} establishes the relationship between the G\^{a}teaux derivative of the cost functional of a stochastic dynamical system and its Hamiltonian. The convex analysis approach to stochastic optimization has been applied in several financial related context, including, \cite{cvitanic1992convex} which studies portfolio optimization, \cite{bank2017hedging} which studies a stochastic tracking problem in finance, \cite{JaimungalPhil2018, casgrain2018mean} which analyses an algorithmic trading problem with heterogeneous agents, and \cite{ShrivatsFirooziJaimungal2020} formulates solar renewable energy certificate (SREC) markets using mean-field game (MFG) methodology. 


 MFG theory establishes the existence of approximate Nash equilibria and the corresponding individual strategies for stochastic dynamical systems in games involving a large number of agents. The equilibria are termed $\epsilon$-Nash equilibria and are generated by the local, limited information control actions of each agent in the population. The control actions constitute the best response of each agent with respect to the behaviour of the mass of agents. Moreover, the approximation error, induced by using the MFG solution, converges to zero as the population size tends to infinity. The analysis of this set of problems originated in \cite{HuangCDC2003a, HuangCDC2003b, HuangCIS2006, HuangTAC2007} (see \cite{CHM17}), and independently in \cite{Lasry2006a, Lasry2006b, Lasry2007}. Many extensions and generalizations of mean-field games exist, principally among them are the probabilistic formulation \cite{CarmonaDelarue2013, CarmonaDelarueBook2018}, the master equation approach \cite{CardaliaguetMasterEqBook2019} and mean-field type control theory \cite{BensoussanBook2013}.  
 
 In \cite{Huang2010} and \cite{Huang2012} the authors, for the first time, analyse and solve mean-field game systems where there is a major agent (whose impact does not vanish in the limit of infinite population size)  together with a population of minor  agents (where each agent  has individually asymptotically negligible effect). Major-minor problems lead to stochastic mean fields; by extending the state space for such systems, \cite{Huang2010} and \cite{NourianSiam2013} establish the existence of closed-loop $\epsilon$-Nash equilibria together with the individual agents' control laws that yield the equilibria, respectively, for an LQG case and a general nonlinear case. Following this work, the partially observed MFG theory for nonlinear and linear quadratic systems with major and minor agents is developed in \cite{Kizilkale2013, Kizilkale2014, KizilkaleTAC2016, SenCDC2014, SenACC2015, SenSiam2016, CainesSen2019, FirooziCDC2015, FirooziCainesTAC} where the agents' states are partially observed in additive Brownian noise. 

Other works that seek an $\epsilon$-Nash equilibrium between a major agent and a large number of minor agents in a mean-field game setup include \cite{huang2015mean, Kordonis2015, CarmonaZhu2016, CarmonaWang2017, FJC-arXiv2018, FPC-arXiv2018, HuangMA2020, FirooziCDC2017,FirooziCDC2016,FirooziIFAC2017}. In \cite{CarmonaZhu2016}, using the probabilistic approach to mean-field games, the authors establish the existence of an open-loop $\epsilon$-Nash equilibrium for a general case and provide explicit solutions for an LQG case.  In \cite{CarmonaWang2017} an alternative formulation for mean-field games with major and minor agents is proposed,  where the search for Nash equilibria in the infinite-population limit is framed as the search for fixed points in the space of controls for the best response function for the major and minor agents. This permits the authors to formulate the open-loop and closed-loop versions of the problem simultaneously. Subsequently they retrieve the open-loop solutions of \cite{CarmonaZhu2016}.

The works \cite{LasryLions2018, CardaliaguetCirant2018} characterize the Nash equilibrium for a general mean-field game system with one major agent and an infinite number of minor agents via the MFG  Master Equations.  In \cite{CardaliaguetCirant2018} it is shown that for mean field games with a finite number $N$ of minor agents and a major agent, the solution of the corresponding Nash system converges to the solution of the system of master equations as $N$ tends to infinity. It is to be noted that for the LQG case it has been demonstrated in \cite{HuangCIS2020} that the LQG MM MFG Master Equations yield the original LQG MM MFG equations of  \cite{Huang2010}. 

Another line of research in this area aims to characterize a Stackelberg equilibrium between the major agent and the population of minor agents, while the minor agents reach a Nash equilibrium among themselves in the infinite population limit \cite{BensoussanSICON2015, BasarMoon2016, Bensoussan2016, BensoussanSICON2017, BasarMoon2018, FuHorst2018, JiangZhang2019, EliePossamai2019}.

The recent work \cite{FirooziCainesCDC2019} studies LQG mean-field games with two major agents and a large population of minor agents and finds the explicit closed-loop $\epsilon$-Nash strategies for all agents. 

In this work, we take a convex analysis approach to derive the solutions to LQG optimal control problems. We then apply the methodology to major minor (MM) LQG MFG systems addressed in \cite{Huang2010} to retrieve the best response strategies for the major agent and each individual minor agent (for examples and applications of such systems see \cite[Sec 7.1]{Huang2010}, \cite[Sec 4]{FirooziCainesTAC}, \cite{FirooziISDG2017, FirooziPakniyatCainesCDC2017}, and for a formulation of a class of mean-field games as equivalent optimization problems see \cite{LiZhangMTNS2018}). The main features of the convex analysis approach presented here are summarized in the following points:
\begin{itemize}
\item	In the classical approach towards MM LQG MFG (see e.g. \cite{Huang2010, Huang2012, KizilkaleTAC2016, FirooziCDC2015, FirooziCainesTAC,FJC-arXiv2018,FPC-arXiv2018}), one first derives heuristic dynamics  for the mean-field in the infinite population limit, this is done by choosing the structure of the control action for a generic minor agent as a function of its own state and of the system's mean field, where the control coefficients are unknown and distinct for each subpopulation of agents. Next the control actions are substituted in the minor agents' dynamics, and it is  shown that the average of minor agents closed-loop dynamics, as the number of agents goes to infinity, converges to the so-called mean field equation. Finally one extends the major and the minor agents' state space with the mean field to establish the existence of an $\epsilon$-Nash equilibrium and to obtain the individual agents' control laws that yield the equilibrium. In the end, the mean field equation parameters are obtained through the  consistency equations, these effectively equate the nominal  mean field employed in the initial form of the individual agents' control laws with the mean field resulting from the collective action of the mass of agents using these laws. 

However, it is not always easy, a priori, to choose a structure for a generic minor agent's control action which includes the correct processes and parameters; especially for complex and non-standard problems this would be a difficult job. For example, for the case where minor agents have partial observations, the mean field will be a function of the average of estimation errors of agents, and for the case where major agent has also partial observations, the mean field equation would also be a function of the major agent's estimates of its own state and the mean field (see \cite{FirooziCainesTAC}). Another example is the case where (unobserved) latent factors appear in the dynamics of the agents (see \cite{casgrain2018mean,FJC-arXiv2018,FJPCDC2018}). In contrast, our approach makes no prior assumptions on the structure of minor agents' control actions and the evolution of the mean-field. Instead we derive the mean-field equation while solving the infinite-population limit of the problem. This alleviates the a priori assumptions on the structure of control actions. 


\item In the classical approach, it may be thought that it is not clear whether the Nash equilibrium and the corresponding $\epsilon$-Nash equilibrium are consequences of the assumptions imposed a priori. The approach we take here verifies that one obtains the same Nash equilibrium (and hence $\epsilon$-Nash equilibrium) as in the classical approach without imposing prior assumptions and hence answers the question about the resulting equilibria. 

\item	The approach taken here helps to elucidate the impact that agents, both major and minor, have on one another and on the overall system. Specifically, the analysis shows how a perturbation in an agent's control action propagates through the system and how it affects other agents. Hence, it provides a tool for understanding the complicated interactions of agents in large-scale systems. 

\item The method is applicable to non-standard single/multi agent stochastic problems (e.g. \cite{casgrain2018mean,FJC-arXiv2018,FJPCDC2018}), where the classical results do not apply, and extending the classical approaches such as maximum principle and dynamic programming may be difficult. This is because the formulation in the manuscript readily generalizes to these more sophisticated cases. Moreover, it leads to a direct characterization of the optimal control actions in stochastic optimal control problems.



\end{itemize}

The remainder of this paper is organized as follows. \Cref{sec:overview} provides a convex analysis approach using G\^ateaux derivatives for static systems. \Cref{sec:LQGproblems} extends these methods to construct the solution to single-agent LQG problems. Finally, in \Cref{sec:MMLQGMFG}, we further extend the approach so as to retrieve the best response strategies for MM LQG MFG systems.

\section{Convex Analysis Overview}\label{sec:overview}

Let $V$ be a reflexive Banach space, with corresponding dual space $V^*$, and $\mathscr{V}$ be a non-empty closed convex subset of $V$.
\begin{definition}[G\^ateaux Derivative \cite{ConvecAnalysisBook1999, Allaire2007}]\label{def:GateauxDerivative} The function $J$ defined on a neighbourhood of $u \in V$ with values in $\mbR$ is G\^ateaux differentiable at $u$ in the direction of $\omega\in V$ if there exists $\mcD{J(u)} \in V^*$ such that
\begin{align} \label{GateauxDerivative}
\langle  \mcD{J(u)}, \omega \rangle = \lim_{\epsilon \rightarrow 0} \frac{J(u+\epsilon \,\omega)-J(u)}{\epsilon}.
\end{align}
The function $\mcD{J(u)}$ is called the G\^ateaux derivative of $J$ at $u$.
\end{definition}
\begin{theorem}[Euler Inequality] \label{thm:EulerIneq} Assume  $J$ is convex, continuous, proper, and G\^ateaux differentiable with continuous derivative $\mcD{J(u)}$. Then
\begin{equation}
J(u) = \inf_{v \in \mathscr{V}} J(v),
\end{equation}
if and only if
\begin{equation} \label{EulerIneq}
\langle \mcD{J(u)}, v-u \rangle \geq 0,   \quad \forall v \in \mathscr{V}.
\end{equation}
\end{theorem}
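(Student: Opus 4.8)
The plan is to prove the two implications separately. Convexity of the admissible set $\mathscr{V}$ drives the necessity of \eqref{EulerIneq}, while convexity of the functional $J$ drives its sufficiency; in both directions the link between a finite increment and the derivative is the observation that, because $J$ is G\^ateaux differentiable, the one-sided limit of the difference quotient taken along a feasible segment equals $\langle \mcD{J(u)}, v-u\rangle$ by \cref{def:GateauxDerivative}.

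First I would show that a minimizer satisfies \eqref{EulerIneq}. Fix an arbitrary $v \in \mathscr{V}$. Since $\mathscr{V}$ is convex, the point $u+\epsilon(v-u) = (1-\epsilon)u + \epsilon v$ lies in $\mathscr{V}$ for every $\epsilon \in (0,1]$, so optimality of $u$ yields $J(u+\epsilon(v-u)) - J(u) \geq 0$. Dividing by $\epsilon > 0$ and letting $\epsilon \to 0^+$, the difference quotient converges to $\langle \mcD{J(u)}, v-u\rangle$ (the existence of the two-sided limit in \eqref{GateauxDerivative} forces the right-hand limit to coincide with it), and nonnegativity is preserved in the limit, giving \eqref{EulerIneq}.

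For the converse I would first establish the gradient inequality $J(v) \geq J(u) + \langle \mcD{J(u)}, v-u\rangle$ for all $v \in \mathscr{V}$, which is where convexity of $J$ enters. For $\epsilon \in (0,1)$, convexity gives $J(u+\epsilon(v-u)) = J((1-\epsilon)u + \epsilon v) \leq (1-\epsilon)J(u) + \epsilon J(v)$, hence $\tfrac{1}{\epsilon}\big(J(u+\epsilon(v-u)) - J(u)\big) \leq J(v) - J(u)$. Letting $\epsilon \to 0^+$ and again identifying the left-hand limit with $\langle \mcD{J(u)}, v-u\rangle$ yields the gradient inequality. Assuming \eqref{EulerIneq} then gives $J(v) - J(u) \geq \langle \mcD{J(u)}, v-u\rangle \geq 0$ for every $v \in \mathscr{V}$, i.e. $J(u) = \inf_{v\in\mathscr{V}} J(v)$.

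The technical point I would be most careful about is the passage to the limit $\epsilon \to 0^+$ in both directions: the feasible perturbations are genuinely one-sided, since only $\epsilon > 0$ keeps $u+\epsilon(v-u)$ inside the convex set $\mathscr{V}$, so I must justify that the one-sided difference quotient still converges to the G\^ateaux derivative. This is immediate from the existence of the full two-sided limit in \eqref{GateauxDerivative}, and the continuity and properness hypotheses guarantee that $J$ and $\mcD{J(u)}$ are well defined along the segment, so no additional regularity is needed. It is worth emphasizing the division of labour: convexity of $\mathscr{V}$ is used in both implications (to keep the segment feasible), whereas convexity of $J$ is used only in the sufficiency direction, through the gradient inequality.
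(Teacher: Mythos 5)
Your proof is correct and is precisely the classical argument that the paper relies on: the paper itself states \Cref{thm:EulerIneq} without proof, citing \cite{ConvecAnalysisBook1999, Allaire2007}, where the standard proof proceeds exactly as yours does (one-sided difference quotients along the feasible segment for necessity, the gradient inequality from convexity of $J$ for sufficiency). Your closing remark on the division of labour is also accurate, with the minor caveat that convexity of $\mathscr{V}$ is genuinely needed only for the necessity direction, since the gradient inequality itself does not require the segment to stay in $\mathscr{V}$.
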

\hfill $\square$

\begin{remark}[Euler Equality] \label{coro:convexAnalysis} In \Cref{thm:EulerIneq} for the case where $\mathscr{V}=V$, $\omega=v-u$ generates the whole space $V$, and therefore \eqref{EulerIneq} reduces to Euler equality
\begin{equation} \label{EulerEq}
\langle \mcD{J(u)}, \omega \rangle = 0,   \quad \forall \omega \in V,
\end{equation}
which implies that
\begin{equation}
J(u) = \inf_{v \in \mathscr{V}} J(v) \quad \Leftrightarrow \quad \mcD{J(u)} =0.
\end{equation}
\end{remark}

The Banach space in this paper is the space of square-integrable $\mbR^m$-valued measurable functions, which will be specified in detail for single-agent LQG systems and major minor LQG mean-field game systems in \Cref{sec:LQGproblems} and \Cref{sec:MMLQGMFG}, respectively.

\section{Single-Agent LQG Problems}
\label{sec:LQGproblems}

In this section, we rederive the solution to single-agent LQG problems  using a convex analysis method.
\subsection{Dynamics}
Consider single-agent LQG systems with governing dynamics
\begin{align}\label{SLQConvsys}
dx_t = (A\, x_t + B\,u_t +b(t) )\,dt + \sigma(t)\, dw_t,
\end{align}
where $t \in \mfT:=[0,T]$, and the continuous processes $x_t \in \mbR^n$, $u_t \in \mbR^m$, and $w_t \in \mbR^r$  denote, respectively, the state, the control action, and a standard Wiener process. Moreover, $A \in \mbR^{n \times n}$, $B \in \mbR^{n \times m}$ are constant, and $b(t) \in \mbR^n$, $\sigma(t)\in \mbR^{n \times r}$, are deterministic, continuous and bounded functions of time.

\subsubsection{Control $\sigma$-Fields}
We denote by $\mc{F}\coloneqq (\mc{F}_{t})_{t\in\mfT}$ the natural filtration generated by the agent's state $(x_{t})_{t\in\mfT}$. The admissible set of controls $\mc{U}$ is the set of continuous linear state feedback control laws $u_t = u(t,x_t), {t\in\mfT},$ that are $\mc{F}$-adapted $\mbR^m$-valued processes such that $\mb{E}[\int_0^T u_t^\intercal u_t dt] < \infty$, for any finite $T$.

\subsection{Cost Functional}
The cost functional to be minimized is given by 
\begin{equation}\label{SLQconvCost}
J(u) =   \tfrac{1}{2}  \mathbb{E} \Big[ e^{-\rho T}x_T^\intercal \hQ x_T +\int_{0}^{T} e^{-\rho t} \Big \{ x_t^\intercal Qx_t + 2\,x_t^\intercal N u_t + u_t^\intercal R u_t -2\, x_t^\intercal \eta - 2\, u_t^\intercal n \Big \}\,dt  \Big],
\end{equation}
where $\hQ , Q \in \mbR^{n \times n}$, $N \in \mbR^{n \times m}$, $R \in \mbR^{m \times m}$, $\eta \in \mbR^n$, $n \in \mbR^m$, denote the weight matrices and  $\rho \in \mbR$ denotes the discount rate.
\begin{assumption}\label{ass:singleAgentConvCond}
For the cost functional \eqref{SLQconvCost} to be convex, we assume $\hQ  \geq 0$, $R>0$, and $Q - N R^{-1} N^\intercal \geq 0$.
\end{assumption}

\subsection{Optimal Control Action}

The system dynamics \eqref{SLQConvsys} together with the cost functional \eqref{SLQconvCost} constitute a stochastic LQ optimal control problem, which we  solve  using the following theorem.
\begin{theorem}[G\^ateaux Derivative of Cost for LQG Systems] \label{thm:LQGGatDeriv}For the class of LQG systems described by \eqref{SLQConvsys}-\eqref{SLQconvCost}, the G\^ateaux derivative of the cost functional is
\begin{multline}\label{LQGGatDerivFinal}
\langle \mcD{J(u)}, \omega \rangle = \mathbb{E} \bigg [\int_0^T \omega_t^\intercal \bigg\{e^{-\rho t}N^\intercal x_t + e^{-\rho t} R\,u_t -e^{-\rho t} n \\
+ B^\intercal \Big(e^{-A^\intercal t} M_t -\int_0^t e^{-\rho s} e^{A^\intercal(s-t)} (Q x_s + N u_s - \eta )ds \Big )  \bigg  \} dt \bigg ],
\end{multline}
where $\omega$ lies in the space $\mc{U}$ and $(M_t)_{t\in\mfT}$ is a martingale given by
\begin{align}
M_t =  \mathbb{E} \Big[e^{-\rho T} e^{A^\intercal T} \hQ  x_ T +\int_0^T e^{-\rho s}e^{A^\intercal s }  (Q x_s + N u_s- \eta \big )ds \Big | \mc{F}_t \Big].
\label{eqn:DefM}
 \end{align}
\end{theorem}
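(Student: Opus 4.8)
The plan is to exploit the fact that, because the dynamics \eqref{SLQConvsys} are affine in the control, the perturbed state depends affinely on the perturbation parameter, so the difference quotient in \Cref{def:GateauxDerivative} can be computed exactly rather than only in the limit. First I would fix a direction $\omega\in\mc{U}$ and consider the control $u+\epsilon\,\omega$. Writing $x^\epsilon$ for the corresponding state, the scaled difference $y_t:=(x_t^\epsilon-x_t)/\epsilon$ satisfies the variational equation $dy_t=(A\,y_t+B\,\omega_t)\,dt$ with $y_0=0$, since the terms $b(t)\,dt$ and $\sigma(t)\,dw_t$ do not involve the control and cancel exactly. This holds independently of $\epsilon$, and its solution is the adapted process $y_t=\int_0^t e^{A(t-s)}B\,\omega_s\,ds$.

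Next I would substitute $x_t^\epsilon=x_t+\epsilon\,y_t$ and $u_t+\epsilon\,\omega_t$ into the quadratic cost \eqref{SLQconvCost}. Because the integrand is quadratic, $J(u+\epsilon\,\omega)$ is an exact second-degree polynomial in $\epsilon$, so the limit in \eqref{GateauxDerivative} exists trivially and the G\^ateaux derivative is precisely the linear coefficient; the square-integrability built into $\mc{U}$ guarantees that all the expectations are finite. Collecting this coefficient (taking $\hQ$, $Q$, $R$ symmetric, as one may in a quadratic form) splits it into two groups: the terms in which $\omega_t$ appears directly, namely $e^{-\rho t}\,\omega_t^\intercal(N^\intercal x_t+R\,u_t-n)$, which reproduce the first line of \eqref{LQGGatDerivFinal}; and the terms in which the perturbation enters through $y_t$, namely
\begin{equation}
\mathbb{E}\Big[e^{-\rho T}(\hQ x_T)^\intercal y_T+\int_0^T e^{-\rho t}(Q x_t+N u_t-\eta)^\intercal y_t\,dt\Big].
\end{equation}

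The heart of the argument is rewriting this second group in the required feedback-free form. Substituting the integral representations of $y_t$ and $y_T$ and applying Fubini's theorem to interchange the order of integration (valid under the integrability assumptions) turns the coefficient of $\omega_t^\intercal B^\intercal$ inside the expectation into $e^{-A^\intercal t}\big(\Xi-\int_0^t e^{-\rho s}e^{A^\intercal s}(Q x_s+N u_s-\eta)\,ds\big)$, where $\Xi:=e^{-\rho T}e^{A^\intercal T}\hQ x_T+\int_0^T e^{-\rho s}e^{A^\intercal s}(Q x_s+N u_s-\eta)\,ds$ is the non-adapted terminal random variable appearing inside \eqref{eqn:DefM}. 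The only care needed here is matching the exponentials $e^{A^\intercal(r-t)}$ produced by the kernel $e^{A(t-r)}$ of $y_t$ against the discount factors, together with the bookkeeping $\int_t^T=\int_0^T-\int_0^t$ that isolates $\Xi$.

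The main obstacle, and the step that introduces the martingale $M_t$, is that $\Xi$ is not $\mc{F}_t$-measurable, whereas the remaining factors at time $t$ (the admissible $\omega_t$, the deterministic $e^{-A^\intercal t}$, and the integral over $[0,t]$) all are. I would therefore condition on $\mc{F}_t$ and invoke the tower property: pulling the $\mc{F}_t$-measurable factors out of the inner conditional expectation replaces $\Xi$ by $\mathbb{E}[\Xi\mid\mc{F}_t]=M_t$, which is exactly \eqref{eqn:DefM}. Distributing $e^{-A^\intercal t}$ through the retained integral (so that $e^{-A^\intercal t}e^{A^\intercal s}=e^{A^\intercal(s-t)}$) then produces the bracketed term $B^\intercal\big(e^{-A^\intercal t}M_t-\int_0^t e^{-\rho s}e^{A^\intercal(s-t)}(Q x_s+N u_s-\eta)\,ds\big)$, and adding the direct $\omega_t$ terms from the second paragraph yields \eqref{LQGGatDerivFinal}. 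The difficulty is almost entirely in justifying Fubini and in handling the measurability so that the conditioning is legitimate; the remaining algebra of exponentials and signs is routine.
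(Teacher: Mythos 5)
Your proposal is correct and delivers exactly \eqref{LQGGatDerivFinal}, but it takes a leaner route than the paper through the technical middle of the argument. Both proofs share the same skeleton---exact affine dependence of $x^\epsilon$ on $\epsilon$, expansion of the quadratic cost so that the G\^ateaux derivative is the linear coefficient, an interchange of integration order, and the tower property to produce $M_t$---yet they diverge on the treatment of the terminal cost. The paper first rewrites $e^{-\rho T}(x_T^{\epsilon})^\intercal \hQ x_T^{\epsilon}$ by It\^o integration by parts as in \eqref{TerminalCost}, which injects $\hQ\,dx_s^{\epsilon}$, and hence a $dw_s$ integral, into the linear coefficient; this forces the paper to verify the hypotheses of the \emph{stochastic} Fubini theorem in \eqref{intDecomp}--\eqref{GatDerivFubini}, and then to integrate by parts a second time in order to reconstitute the terminal term $e^{-\rho T}e^{A^\intercal(T-t)}\hQ x_T$ in \eqref{GatDerivFubiniFinal} before conditioning. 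You instead keep the cross term $e^{-\rho T}x_T^\intercal\hQ y_T$ intact; because your variational process $y_t=\int_0^t e^{A(t-s)}B\,\omega_s\,ds$ has paths of finite variation and involves no stochastic integral, the only interchange you need is the classical Fubini theorem on the product of $\Omega$ with the time simplex $\{0\le s\le t\le T\}$ (justified by Cauchy--Schwarz and the square-integrability built into $\mc{U}$), after which the split $\int_t^T=\int_0^T-\int_0^t$ and the conditioning $\mathbb{E}[\,\Xi\mid\mc{F}_t\,]=M_t$ (with $\Xi$ the non-adapted terminal random variable you isolate) give \eqref{eqn:DefM} and \eqref{LQGGatDerivFinal} directly. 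What your route buys is economy: no It\^o expansion of the terminal cost, no stochastic Fubini theorem, and no pair of mutually cancelling integrations by parts; since the paper's intermediate form \eqref{GatDerivFubini} in terms of $\hQ\,dx_s$ plays no role later, nothing is lost by bypassing it. The one step to spell out when writing this up is the passage of the conditional expectation under the $dt$-integral (pointwise tower property plus Fubini in $(\omega,t)$), which is the same ``smoothing'' step the paper invokes at \eqref{SLQGderiv2}.
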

\hfill $\square$
\begin{proof}
The solution $x_t$ to the state representation of the system \eqref{SLQConvsys} subject to the control action $u_t$ is
\begin{equation}\label{stateOrig}
x_t = e^{At} x_0 + \int_0^t e^{A(t-s)}\big(B u_s +b(s)\big) ds + \int_0^t e^{A(t-s)}\sigma(s) dw_s,
\end{equation}
where $x_0 \in \mbR^n$ and $\phi(t,s)=e^{A(t-s)},\,\forall\, s \leq t \leq T,$ denote, respectively, the initial state and the state transition matrix for the system $\eqref{SLQConvsys}$.

Let $x^{\epsilon}$ denote the solution to \eqref{SLQConvsys} subject to a perturbed control action $u+\epsilon \omega$ in the direction of $\omega \in \mc{U}$. It is given by
\begin{equation}\label{statePert}
x_t^{\epsilon} = e^{At} x_0 + \int_0^t e^{A(t-s)}\big(B u_s +b(s)\big)\, ds + \int_0^t e^{A(t-s)}\sigma(s)\, dw_s + \epsilon \int_0^t e^{A(t-s)} B \omega_s\, ds.
\end{equation}
Substituting \eqref{stateOrig} into \eqref{statePert} implies
\begin{equation}\label{sOrigVSsPert}
x_t^{\epsilon} = x_t + \epsilon \int_0^t e^{A(t-s)} B \omega_s \,ds,
\end{equation}
or equivalently (from  differentiating both sides)
\begin{align} \label{sPertEvol}
d{x}^{\epsilon}_t = d{x}_t + \epsilon B \omega_t\, dt + \epsilon A \int_0^t e^{A(t-s)}B\omega_s\,ds.
\end{align}
The cost of the perturbed control action $u + \epsilon \omega$ and the corresponding perturbed state $x^{\epsilon}$ is
\begin{multline}
J(u+\epsilon \omega) = \tfrac{1}{2}\mb{E} \bigg [e^{- \rho T}(x_T^{\epsilon})^\intercal \hQ  x_T^{\epsilon} + \int_0^T e^{- \rho s}\Big\{ (x_s^{\epsilon})^\intercal Q x_s^{\epsilon}
+ 2(x_s^{\epsilon})^\intercal N (u_s+\epsilon \omega_s)\\ + (u_s+\epsilon \omega_s)^\intercal R (u_s+\epsilon \omega_s) -2 (x_s^{\epsilon})^\intercal
\eta - 2(u_s+\epsilon \omega_s)^\intercal n  \Big \} ds \bigg].\label{CostPert}
\end{multline}
Using integration by parts for It\^o processes \cite{Shreve1998}, allows us to rewrite the terminal cost as
\begin{multline}
e^{-\rho T}(x_T^{\epsilon})^{\intercal} \hQ  x_T^{\epsilon} = (x_0)^{\intercal} \hQ  x_0  + \int_{0}^T d\big(e^{-\rho s}(x_s^{\epsilon})^\intercal \hQ  x_s^{\epsilon}\big)\allowdisplaybreaks\\
=(x_0)^\intercal \hQ  x_0 -\rho \int_0^T e^{- \rho s} (x_s^{\epsilon})^\intercal \hQ  x_s^{\epsilon}\,ds+ 2 \int_{0}^T e^{-\rho s}(x_s^{\epsilon})^\intercal \hQ  \,dx_s^{\epsilon} \\
 + \int_0^T e^{-\rho s} \sigma(s)^\intercal \hQ \sigma(s)\,ds.
\label{TerminalCost}
\end{multline}

Substituting \eqref{TerminalCost} and then \eqref{sOrigVSsPert}-\eqref{sPertEvol} into \eqref{CostPert}, and collecting terms, we have
\begin{multline}  \label{CostPertVsCostOrig}
J(u+\epsilon \omega) = J(u) + \mb{E}\bigg[\epsilon \int_{0}^{T} e^{-\rho s} \bigg\{ \bigg(\int_0^s e^{A(s-t)}B \omega_t \,dt \bigg)^\intercal\bigg(\hQ  dx_s \allowdisplaybreaks
\\
+ (Q x_s+ N u_s+A^\intercal \hQ  x_s-\rho \hQ  x_s - \eta)\,ds\bigg) +
\big( (x_s)^\intercal N \omega_s + (x_s)^T \hQ  B \omega_s  \allowdisplaybreaks
\\
+ (u_s)^\intercal  R\omega_s -n^\intercal \omega_s \big) ds\bigg \}
+ \epsilon^2 \int_0^T e^{- \rho s}\bigg\{\big(\int_0^s e^{A(s-t)}B \omega_t dt\big)^\intercal\big(\hQ  A \int_0^s e^{A(s-t)}B \omega_t dt \allowdisplaybreaks\\-\rho \hQ  \int_0^s e^{A(s-t)}B \omega_t dt +\hQ  B \omega_s + Q \int_0^s e^{A(s-t)}B\omega_tdt + N \omega_s\big) + (\omega_s)^\intercal  R\omega_s\big)  \bigg\}ds\bigg].
\end{multline}
The G\^ateaux derivative of $\mcD{J(u)}$ in the direction of $\omega$ is obtained by subtracting $J(u)$, dividing both sides of the equation by $\epsilon$, and finally taking the limit as $\epsilon \rightarrow 0$. The result is
\begin{multline}\label{GatDeriv}
\langle \mcD{J(u)}, \omega \rangle = \mb{E} \bigg[\int_{0}^{T} e^{-\rho s}\bigg\{ \bigg(\int_0^s e^{A(s-t)}B \omega_t dt \bigg)^\intercal
\Big(\hQ  dx_s + (Q x_s+ N u_s+A^\intercal \hQ  x_s
\\
- \rho \hQ  x_s-\eta) ds\Big) +
\big( x_s^\intercal N \omega_s + x_s^\intercal \hQ  B \omega_s  + u_s^\intercal R\omega_s -n^\intercal \omega_s\big) ds\bigg \}\bigg].
\end{multline}
We substitute \eqref{SLQConvsys} to rewrite the double integral in \eqref{GatDeriv} as
\begin{multline}\label{intDecomp}
\int_{0}^{T} \bigg\{\int_0^s e^{-\rho s}\omega_t^\intercal B^\intercal e^{A^\intercal(s-t)}\hQ  \sigma(s)dtdw_s+
\int_0^s e^{-\rho s}\omega_t^\intercal B^\intercal e^{A^\intercal(s-t)}\\\times \Big(\hQ A x_s + \hQ Bu_s +\hQ b(s)+Q x_s+ N u_s+A^\intercal \hQ  x_s - \rho \hQ  x_s-\eta\Big) dtds\bigg\}.
\end{multline}
Let us denote by $\phi:= (\phi_{t,s})_{0\le t\le s \le T}$, where 
\begin{equation}
\phi_{t,s} = e^{-\rho s}\omega_t^\intercal B^\intercal e^{A^\intercal(s-t)}\hQ  \sigma(s). 
\end{equation}
Given that (i) $\phi_{t,s}$ is $\mc{F}_t\times \mc{B}(\mbR)$-measurable, where $\mc{B}(\mbR)$ denotes the smallest $\sigma$-algebra that contains the open intervals of $\mbR$, (ii) every realization of $\phi_{t,s}$ is bounded $\forall s\in \mbR,\, t \in \mfT$, (iii) $\int_t^T\phi_{t,s}dw_s$ is  $\mc{F}_t \times \mc{B}(\mbR)$-measurable $\forall t \in \mfT$; the conditions of the stochastic version of Fubini's theorem \cite{Ikeda2014} hold. Moreover, since the processes in the second term of \eqref{intDecomp} are continuous, the conditions of the ordinary Fubini's theorem hold \cite{StrangBook}. Applying Fubini's theorem to change the order of integration in \eqref{GatDeriv} results in
\begin{equation}\label{GatDerivFubini}
\begin{split}
&\langle \mcD{J(u)}, \omega \rangle
 = \mathbb{E} \bigg[ \int_0^T\!\! \omega_t^\intercal \bigg\{
e^{-\rho t} B^\intercal \hQ  x_ t+ e^{-\rho t} N^\intercal x_t + e^{-\rho t}R u_t - e^{-\rho t}n
\\
&+ B^\intercal \int_t^T \!\!\!e^{-\rho s}e^{A^\intercal(s-t)} \Big (\hQ  dx_s + \big(\left( (A^\intercal -\rho\Id_n) \hQ  + Q\right) x_s + N u_s - \eta \big )ds \Big) \bigg  \} \bigg ] dt,
\end{split}
\end{equation}
where we denote by $\Id_n \in \mbR^{n \times n}$ the identity matrix. Integrating by parts once again, we have
 \begin{equation}
 \begin{split}
& \int_t^T e^{A^\intercal(s-t)-\rho s}\left((A^\intercal-\rho\Id_n)\hQ x_s ds +\hQ dx_s\right) 
 \\
 &\qquad\qquad =
 \int_t^T d( e^{A^\intercal(s-t)}e^{-\rho s}\hQ x_s) 
 =   e^{-\rho T}e^{A^\intercal(T-t)}\hQ x_T - e^{-\rho t} \hQ  x_t,
 \end{split}
  \end{equation}
and substituting into \eqref{GatDerivFubini} yields
 \begin{multline} \label{GatDerivFubiniFinal}
\langle \mcD{J(u)}, \omega \rangle  = \mathbb{E} \bigg [\int_0^T \omega_t^\intercal \bigg\{ e^{-\rho T} B^\intercal e^{A^\intercal(T-t)} \hQ  x_T+ e^{-\rho t} N^\intercal x_t + e^{-\rho t}R u_t -e^{-\rho t} n \\
+ B^\intercal \int_t^T  e^{-\rho s}e^{A^\intercal(s-t) } (Q x_s + N u_s - \eta \big)ds \bigg \} dt\bigg] .
 \end{multline}
Using the smoothing property of conditional expectations \cite{CainesBook1988}, the G\^ateaux derivative \eqref{GatDerivFubiniFinal} may be rewritten as
 \begin{multline}\label{SLQGderiv2}
\langle \mcD{J(u)}, \omega \rangle = \mathbb{E} \bigg [\int_0^T \omega_t^\intercal \bigg\{ e^{-\rho t} N^\intercal x_t + e^{-\rho t} R u_t - e^{-\rho t} n \\+ B^\intercal \mathbb{E} \Big[e^{-\rho T} e^{A^\intercal(T-t)} \hQ  x_T +\int_t^T e^{-\rho s}e^{A^\intercal(s-t) }  (Q x_s + N u_s- \eta \big )ds \Big | \mc{F}_t \Big]  \bigg  \} dt\bigg ].
 \end{multline}
Next, defining the martingale $M$ given in \eqref{eqn:DefM} 
allows us to rewrite \eqref{SLQGderiv2} as stated in \eqref{LQGGatDerivFinal}.
\end{proof}
We next provide the form of the optimal control action in the following Theorem.
\begin{theorem}[LQG Optimal Control Action]\label{thm:LQGoptCntrlRaw} Given \textit{Assumption \ref{ass:singleAgentConvCond}}, the optimal control action for the LQG system \eqref{SLQConvsys}-\eqref{SLQconvCost} is
\begin{equation}\label{CntrlActionRaw}
u^*_t = - R^{-1} \bigg [ N^\intercal x^*_t -n +  B^\intercal  e^{\rho t} \Big (e^{-A^\intercal t}M_t -\int_0^t e^{-\rho s} e^{A^\intercal(s-t) }  (Q x^*_s + N u^*_s -\eta )ds\Big ) \bigg ].
\end{equation}
\end{theorem}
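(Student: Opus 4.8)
The plan is to combine the Euler equality of \Cref{coro:convexAnalysis} with the explicit Gâteaux derivative already computed in \Cref{thm:LQGGatDeriv}. Under \Cref{ass:singleAgentConvCond} the cost functional \eqref{SLQconvCost} is convex, continuous and proper, and by \Cref{thm:LQGGatDeriv} its Gâteaux derivative is continuous; moreover the admissible set $\mc{U}$ is the \emph{whole} Banach space $V$ of square-integrable $\mc{F}$-adapted $\mbR^m$-valued processes. Hence the optimality characterization collapses from the Euler inequality \eqref{EulerIneq} to the Euler equality \eqref{EulerEq}, and $u^*$ minimizes $J$ if and only if $\langle \mcD{J(u^*)}, \omega \rangle = 0$ for every direction $\omega \in \mc{U}$.

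First I would substitute the optimal pair $(x^*, u^*)$ and the associated martingale $M$ into the derivative formula \eqref{LQGGatDerivFinal}, so that the optimality condition reads
\[
\mathbb{E}\Big[\int_0^T \omega_t^\intercal\, \Gamma_t\, dt\Big] = 0, \qquad \forall\, \omega \in \mc{U},
\]
where $\Gamma_t$ denotes the $\mbR^m$-valued process appearing in the braces of \eqref{LQGGatDerivFinal}, evaluated along $(x^*, u^*, M)$. The crucial structural observation is that $\Gamma_t$ is $\mc{F}_t$-adapted: the martingale value $M_t$ is $\mc{F}_t$-measurable by its definition \eqref{eqn:DefM}; the states and controls $x^*_s, u^*_s$ entering the inner integral run only over $s \in [0,t]$ and are therefore $\mc{F}_t$-measurable; and $x^*_t, u^*_t$ are adapted as well. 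Consequently $\Gamma$ itself is an admissible variation, and the fundamental lemma of the calculus of variations applies directly in the $L^2$-adapted setting: taking $\omega_t = \Gamma_t$ gives $\mathbb{E}[\int_0^T |\Gamma_t|^2\, dt] = 0$, which forces $\Gamma_t = 0$ for $dt \otimes d\mathbb{P}$-almost every $(t,\omega)$.

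With $\Gamma_t = 0$ it remains only to solve algebraically for $u^*_t$. Since $R > 0$ by \Cref{ass:singleAgentConvCond}, the matrix $R$ is invertible; multiplying the identity $\Gamma_t = 0$ by $e^{\rho t} R^{-1}$ and isolating $u^*_t$ reproduces exactly the claimed expression \eqref{CntrlActionRaw}, where the factor $e^{\rho t}$ emerges from cancelling the discount weights $e^{-\rho t}$ carried by the $N^\intercal x^*_t$, $R u^*_t$ and $n$ terms against the undiscounted $B^\intercal$ term.

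The only genuinely delicate point—and the step I expect to be the main obstacle—is the passage from the integrated condition to the pointwise equation $\Gamma_t = 0$, namely verifying that $\Gamma$ is adapted and square-integrable so that it qualifies as an admissible variation and that no additional $\mc{F}_t$-conditional projection is required. This adaptedness rests entirely on $M_t$ being $\mc{F}_t$-measurable and on the inner integral running over $[0,t]$ rather than $[0,T]$; both features were arranged precisely by the Fubini rearrangement leading to \eqref{GatDerivFubiniFinal} and the smoothing step producing \eqref{SLQGderiv2} in the proof of \Cref{thm:LQGGatDeriv}. Everything beyond this reduces to routine linear algebra.
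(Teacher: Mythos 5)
Your proposal is correct and follows essentially the same route as the paper: invoke the Euler equality, test the vanishing G\^ateaux derivative against the $\mc{F}_t$-adapted bracket process $\Gamma$ itself to force $\Gamma = 0$ for $dt\otimes d\mathbb{P}$-almost every point, then solve algebraically for $u^*_t$ using $R>0$. The one step you defer---verifying that $\Gamma$ is square-integrable so that it qualifies as an admissible variation---is precisely what the paper's proof carries out with explicit constants (via Jensen, Cauchy--Schwarz and triangle inequalities), together with a final check, which you omit, that the candidate $u^*$ defined by \eqref{CntrlActionRaw} is itself in $\mc{U}$.
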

\hfill $\square$
\begin{proof}
From \Cref{thm:EulerIneq} and \Cref{coro:convexAnalysis}, a necessary condition for $u^*\in\mc{U}$ to be the optimal control action is
\begin{align}\label{nessAndSuffCond}
 \langle \mcD{J(u^*)}, \omega  \rangle = 0, \quad  \,\,\, \forall \,\, \omega \in \mc{U}.
 \end{align}
Moreover, as \Cref{ass:singleAgentConvCond} holds,  \eqref{nessAndSuffCond} is also sufficient.

From \eqref{LQGGatDerivFinal}, equation \eqref{nessAndSuffCond} holds if and only if
\begin{align}\label{SingleAgCntrl}
 u^*_t = - R^{-1} \bigg [ N^\intercal x^*_t -n +  B^\intercal  e^{\rho t} \Big (e^{-A^\intercal t}M_t -\int_0^t e^{-\rho s} e^{A^\intercal(s-t) }  (Q x^*_s + N u^*_s -\eta )ds\Big ) \bigg ].
\end{align}

To show this, first by direct substitution of \eqref{SingleAgCntrl} into \eqref{LQGGatDerivFinal} we have $\langle \mcD{J(u^*)}, \omega \rangle=0, \forall \omega\in\mc{U}$. Next, suppose that $\langle \mcD{J(u^*)}, \omega \rangle=0,\, \forall \omega\in\mc{U}$, but \eqref{SingleAgCntrl} does not hold on a measurable set $\mc{B}\subset\Omega\times\mfT$ with strictly positive measure. Then define $\tilde{\omega}$ such that
\begin{multline}\label{omega_Test}
 \tilde\omega_t = e^{-\rho t} N^\intercal x_t^{*} + e^{-\rho t} R u_t^{*} -e^{-\rho t} n \\+ B^\intercal e^{\rho t}\Big(e^{-A^\intercal t} M_t -\int_0^t e^{-\rho s} e^{A^\intercal(s-t)} (Q x_s^{*} + N u_s^{*} - \eta )ds\Big),
\end{multline}
so that by assumption $\mathbb{P}\left(\int_0^T |\tilde{\omega_t}|dt>0\right)>0$.

We next show that $\tilde{\omega}\in\mc{U}$. By definition $\tilde{w}_t$ is $\mc{F}_t$-measurable. Moreover, using Jensen's, Cauchy Schwarz and triangle inequalities, we have
\begin{equation}
\mb{E}\left[\int_0^T \tilde{\omega}_t^\intercal\tilde{\omega}_t dt\right] \leq 2\bigg[\lambda \mb{E}\Vert x_0 \Vert^2 + \kappa \mb{E} \int_0^T \Vert u_t^* \Vert^2 dt +\gamma \bigg] < \infty,
\end{equation}
since $u^*_t \in \mc{U}$, and hence $\mb{E} \int_0^T \Vert u_t^* \Vert^2 dt < \infty$, 
$\mb{E}\Vert x_0 \Vert^2 =\Vert x_0 \Vert^2 < \infty $, 
\begin{align}
\lambda &= \big(2 \alpha \int_0^T \Vert e^{At}\Vert^2 dt + 2\beta \Vert e^{AT}\Vert^2 \big) < \infty,\allowdisplaybreaks
\\
\begin{split}
\kappa &= 4T\Vert N \Vert^2\big(\int_0^T\int_0^T \left\Vert e^{A^\intercal(s-t)}\right\Vert^2dsdt\big)
+ 2\Vert R \Vert^2 
\\&
\qquad+ 2 (\beta+\alpha T) \Vert B \Vert^2 \int_0^T \left\Vert e^{A(T-s)}\right\Vert^2 ds < \infty,
\allowdisplaybreaks
\end{split}
\allowdisplaybreaks\\
\begin{split}
\gamma =&~\Vert n \Vert^2 T + 4T^2 \Vert \eta \Vert^2\int_0^T\int_0^T \left\Vert e^{A^\intercal(s-t)}\right\Vert^2dsdt
\\
&+2(T\alpha + \beta) \left(\left\Vert \int_0^T e^{A(T-s)}b(s)ds \right\Vert^2 + 2T \int_0^T\left\Vert e^{A(T-s)}\sigma(s) \right\Vert^2 ds\right) < \infty,\allowdisplaybreaks
\end{split}
\end{align}
\begin{equation}
\alpha = \Vert N^\intercal\Vert^2+ 4T\Vert Q \Vert^2\left(\int_0^T\int_0^T\left\Vert e^{A^\intercal(s-t)}\right\Vert^2dsdt\right) < \infty,
\end{equation}
and
\begin{equation}
\beta = \Vert B^\intercal \Vert^2 \Vert \hQ \Vert^2 \int_0^T \Vert e^{A^\intercal(T-t)} \Vert^2 dt 
< \infty.
\end{equation}

Finally, upon substituting \eqref{omega_Test} into \eqref{LQGGatDerivFinal}, as $\tilde{\omega}$ does not vanish on $\mc{B}$, we have that
\begin{equation}
\langle \mcD{J(u^*)}, \tilde\omega \rangle = \mathbb{E}\left[\int_0^T \tilde\omega^\intercal_s\tilde\omega_s\,ds \right] > 0,
\end{equation}
which contradicts the assumption that $\langle \mcD{J(u^*)}, \omega \rangle=0$,  $\forall \omega\in\mc{U}$.

To demonstrate the candidate control $u^*$ in \eqref{SingleAgCntrl} is indeed admissible, we first see that as all processes in the rhs are $\mc{F}$-measurable, so is the candidate $u^*$. Moreover, using the triangle inequality and Cauchy-Schwarz inequality we have $\mb{E}[\int_0^T u^{*\intercal}_t u_t^* dt ] < \infty$ and hence $u^* \in \mc{U}$.
\end{proof}
\begin{theorem}[LQG State Feedback Optimal Control]\label{thm:optCntrlAct}  For the LQG system  \eqref{SLQConvsys}-\eqref{SLQconvCost}, the optimal control action is given by the linear state feedback control
\begin{equation}\label{LQGCntrl}
u_t^* = - R^{-1}\left(N^\intercal\,  x_t^*-n+B^\intercal\left[\Pi(t)\, x_t^* + s(t)\right]\right),
\end{equation}
where $\Pi(t)$ and $s(t)$ are deterministic functions satisfying the ODEs
\begin{equation}
\rho \Pi(t) = \dot{\Pi}(t) + \Pi(t) A + A^\intercal \Pi(t) - (\Pi(t)B + N)R^{-1}( B^\intercal \Pi(t) + N^\intercal) + Q  ,\label{LQGRiccatiEq}
\end{equation}
\begin{multline}
 \rho s(t) = \dot{s}(t) + \big [(A- BR^{-1} N^\intercal)^\intercal- \Pi(t) B R^{-1} B^\intercal \big] s(t) \\
 + \Pi(t) (b(t)+BR^{-1}n) +NR^{-1}n-\eta,\label{LQGOffsetEq}
\end{multline}
subject to the terminal conditions $\Pi(T) = \hQ $ and $s(T) = 0$.
\end{theorem}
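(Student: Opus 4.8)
The plan is to recognize the stochastic factor multiplying $B^\intercal$ in \eqref{CntrlActionRaw} as an adjoint (costate) process and to show it is affine in the state. Concretely, I set
\[
q_t := e^{\rho t}\Big(e^{-A^\intercal t}M_t - \int_0^t e^{-\rho s}e^{A^\intercal(s-t)}(Q x^*_s + N u^*_s - \eta)\,ds\Big),
\]
so that \eqref{CntrlActionRaw} reads $u^*_t = -R^{-1}(N^\intercal x^*_t - n + B^\intercal q_t)$. Factoring out the deterministic exponentials and using \eqref{eqn:DefM}, I would rewrite
\[
q_t = e^{(\rho\Id_n - A^\intercal)t}\,\mb{E}\Big[e^{-\rho T}e^{A^\intercal T}\hQ x_T + \int_t^T e^{-\rho s}e^{A^\intercal s}(Q x_s + N u_s - \eta)\,ds \,\Big|\, \mc{F}_t\Big].
\]
Since the conditional expectation is a martingale, applying It\^o's product rule to the deterministic prefactor yields the backward linear dynamics $dq_t = \big[(\rho\Id_n - A^\intercal)q_t - Q x^*_t - N u^*_t + \eta\big]dt + d\widetilde{M}_t$ with $\widetilde{M}$ a martingale; substituting $u^*_t = -R^{-1}(N^\intercal x^*_t - n + B^\intercal q_t)$ renders the drift affine in $(x^*_t,q_t)$.

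Next I would posit the affine ansatz $q_t = \Pi(t)x^*_t + s(t)$ with $\Pi,s$ deterministic, differentiate it by It\^o using the closed-loop version of \eqref{SLQConvsys} (the feedback control substituted into the state drift), and equate the resulting drift with the drift of $q_t$ obtained above. Matching the coefficient of $x^*_t$ produces exactly the Riccati equation \eqref{LQGRiccatiEq}, while matching the state-independent terms produces \eqref{LQGOffsetEq}; the terminal data come from evaluating the intrinsic expression for $q_t$ at $t=T$, which gives $q_T = \hQ x_T$ and hence forces $\Pi(T)=\hQ$ and $s(T)=0$. This matching is the computational heart and reduces to correctly grouping the quadratic term $(\Pi B + N)R^{-1}(B^\intercal\Pi + N^\intercal)$ and the linear term $(A - BR^{-1}N^\intercal)^\intercal - \Pi B R^{-1}B^\intercal$.

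To make this rigorous rather than formal, I would reverse the logic: first invoke the standard theory of symmetric matrix Riccati equations to obtain, under \Cref{ass:singleAgentConvCond} ($R>0$, $\hQ\ge 0$, $Q - NR^{-1}N^\intercal\ge 0$), a unique bounded solution $\Pi$ of \eqref{LQGRiccatiEq} on all of $\mfT$, and then solve the resulting linear, $\Pi$-dependent equation \eqref{LQGOffsetEq} for $s$. Setting $\delta_t := q_t - (\Pi(t)x^*_t + s(t))$ and using the two ODEs to annihilate the deterministic part of the drift, I expect $\delta$ to satisfy $d\delta_t = (\rho\Id_n - A^\intercal)\delta_t\,dt + dL_t$ for a martingale $L$, with $\delta_T = 0$. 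The integrating factor $e^{-(\rho\Id_n - A^\intercal)t}$ turns $e^{-(\rho\Id_n - A^\intercal)t}\delta_t$ into a martingale vanishing at $T$, which is therefore identically zero; hence $q_t = \Pi(t)x^*_t + s(t)$ and \eqref{CntrlActionRaw} collapses to \eqref{LQGCntrl}.

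The main obstacle I anticipate is precisely the rigour of this last verification step combined with the well-posedness of the Riccati equation: guaranteeing a global, non-blow-up bounded solution of \eqref{LQGRiccatiEq} on the whole interval is exactly where the convexity in \Cref{ass:singleAgentConvCond} is indispensable, and the passage from ``the affine identity holds'' to ``the coefficient ODEs hold'' must be justified through the uniqueness of the semimartingale drift/martingale decomposition (equivalently, through the backward linear SDE argument for $\delta$) rather than by naively equating coefficients of the random vector $x^*_t$.
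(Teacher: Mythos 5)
Your proposal is correct and follows essentially the same route as the paper's proof: your $q_t$ is exactly the adjoint process $p_t$ defined in \eqref{SLQconvAdjoint}, the backward dynamics you derive for it coincide with the paper's \eqref{SLQconvAdjointDiff} (obtained there via the martingale representation \eqref{MrtnglRepSingle}), and the affine ansatz $q_t=\Pi(t)x^*_t+s(t)$ with drift matching against the closed-loop It\^o expansion is precisely how the paper arrives at \eqref{LQGRiccatiEq}--\eqref{LQGOffsetEq} with $\Pi(T)=\hQ$, $s(T)=0$. The only place you go beyond the paper is the final verification layer: the paper stops at equating the drift and diffusion terms of \eqref{SLQconvdiff1} and \eqref{SLQconvdiff2} --- i.e., formally matching coefficients of the random vector $x^*_t$ --- whereas you first invoke well-posedness of the Riccati and offset ODEs under \Cref{ass:singleAgentConvCond} and then establish $q_t=\Pi(t)x^*_t+s(t)$ by showing the difference $\delta_t$ solves a linear equation driven by a martingale with zero terminal value (noting only that if $u^*$ is kept in its fixed-point form \eqref{CntrlActionRaw} rather than feedback form, the drift of $\delta_t$ acquires an extra bounded time-varying term $(N+\Pi(t)B)R^{-1}B^\intercal\delta_t$, which changes nothing in the integrating-factor argument); this is a rigor refinement of, not a departure from, the paper's argument.
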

\hfill$\square$
\begin{proof}
Define the process $(p_t)_{t\in\mfT}$ by
 \begin{align} \label{SLQconvAdjoint}
 p_t = e^{\rho t}\big(e^{-A^\intercal t}M_t - \int_0^t e^{-\rho s}e^{A^\intercal(s-t)} (Q x^*_s + Nu^*_s - \eta)ds \big).
 \end{align}
It corresponds to  the adjoint process for the system \eqref{SLQConvsys}-\eqref{SLQconvCost} resulting from the stochastic maximum principle. Moreover, the martingale representation theorem states there exists an $\mc{F}$-adapted process $(Z_t)_{t\in\mfT}$ such that
 \begin{equation}\label{MrtnglRepSingle}
 M_t = M_0 + \int_0^t Z_s\,dw_s.
\end{equation}
Next, set
 \begin{align} \label{SLQansatz}
 p_t = \Pi(t)\,x^*_t + s(t),
 \end{align}
where $\Pi(t)\in \mbR^{n\times n}$ and $s(t)\in \mbR^n$ are deterministic functions of time which are to be determined. Substituting this expression into \eqref{CntrlActionRaw} gives
\begin{equation}\label{SLQconvControl}
 u^*_t = - R^{-1} \big [ N^\intercal x^*_t -n+  B^\intercal \big(\Pi(t) x^*_t + s(t) \big) \big].
 \end{equation}
Next, applying It\^{o}'s lemma to \eqref{SLQansatz} and \eqref{SLQConvsys}, and using \eqref{SLQconvControl}, $p_t$ satisfies the SDE
 \begin{multline} \label{SLQconvdiff1}
 d{p}_t = \Big[\big (\dot{\Pi}(t) + \Pi(t) A  - \Pi(t) B R^{-1} N - \Pi(t) B R^{-1} B^\intercal \Pi(t) \big) x_t^{*}
 \\
 - \Pi(t) B R^{-1} B^\intercal s(t)
 + \Pi(t) b +\Pi(t) BR^{-1}n +\dot{s}(t)\Big] dt + \Pi(t) \sigma(t) dw_t.
 \end{multline}
 Applying It\^{o}'s lemma to \eqref{SLQconvAdjoint} and \eqref{MrtnglRepSingle} results in
 \begin{align} \label{SLQconvAdjointDiff}
 dp_t = (\rho p_t -A^\intercal p_t - Qx_t^{*} -N u_t^{*} + \eta)dt + e^{\rho t}e^{-A^\intercal t}Z_t dw_t.
 \end{align}
Substituting \eqref{SLQansatz} and \eqref{SLQconvControl} into \eqref{SLQconvAdjointDiff} yields
\begin{multline} \label{SLQconvdiff2}
d{p}_t = \Big[\big(\rho \Pi(t) - Q  + NR^{-1}N^\intercal + N R^{-1} B^\intercal \Pi(t) -A^\intercal \Pi(t)\big)x^*_t + \rho s(t) \\+ (N^\intercal R^{-1} B^\intercal - A^\intercal) s(t) + \eta - N R^{-1} n\Big]dt + q_t dw_t,
\end{multline}
where $q_t = e^{\rho t} e^{-A^\intercal t} Z_t$.

Finally, for \eqref{SLQconvdiff1} and \eqref{SLQconvdiff2} to be equal, the corresponding drift and diffusion terms must be equal. Hence,
\begin{equation}
q_t = \Pi(t) \sigma(t),
\end{equation}
\begin{equation}
\begin{cases}
\rho \Pi(t) = \dot{\Pi}(t) + \Pi(t) A + A^\intercal \Pi(t) - (\Pi(t)B + N)R^{-1}( B^\intercal\Pi(t) + N^\intercal) + Q ,\\
\Pi(T) = \hQ ,
\end{cases}
\end{equation}
and 
 \begin{equation}
\begin{cases}
 \rho s(t) = \dot{s}(t) + \big [(A- BR^{-1} N^\intercal)^\intercal- \Pi(t) B R^{-1} B^\intercal \big] s(t) \\\hspace{4.8cm}+ \Pi(t) \big(b(t)+BR^{-1}n\big)+NR^{-1}n-\eta,\\
s(T) = 0,
\end{cases}
\end{equation}
which completes the proof.
\end{proof}

\begin{remark}[Finite Horizon LQG Systems] Typically, the cost functional for finite horizon LQG systems contain no discount factor, i.e. $\rho=0$, and in this case,  the Riccati and offset equations \eqref{LQGRiccatiEq}-\eqref{LQGOffsetEq} reduce to
\begin{equation}
\left\{
\begin{aligned}
-\dot{\Pi}(t)&= \Pi(t) A + A^\intercal \Pi(t) - (\Pi(t)B + N)R^{-1}( B^\intercal \Pi(t) + N^\intercal) + Q,
\\[0.25em]
-\dot{s}(t) &=  [(A- BR^{-1} N^\intercal)^\intercal- \Pi(t) B R^{-1} B^\intercal ] s(t)\\
&\qquad
+ \Pi(t)(b(t)+ BR^{-1}n)
+N R^{-1}n-\eta,
\end{aligned}
\right.
\end{equation}
subject to the terminal conditions $\Pi(T) = \hQ $, $s(T) = 0$.
\end{remark}
\subsection{Infinite-Horizon LQG Systems}
For infinite horizon LQG systems where the terminal time $T$ in \eqref{SLQconvCost} is set to infinity, the terminal cost is set to zero. In this case,  the cost  functional is
\begin{equation} \label{costInfHorizon}
J(u) =   \tfrac{1}{2}  \mathbb{E} \left[ \int_{0}^{\infty} e^{-\rho t} \Big \{ x_t^\intercal  Qx_t + 2x_t^\intercal N u_t + u_t^\intercal R u_t -2 x_t^\intercal \eta - 2 u_t^\intercal n \Big \}dt  \right],
\end{equation}
The dynamics \eqref{SLQConvsys} remains the same in  infinite horizon LQG systems.
\begin{assumption}\label{ass:detectability}
The pair $(L, A- (\rho /2)\Id_n)$ is detectable where $L = Q^{1/2}$. \end{assumption}
\begin{assumption}\label{ass:stabilizability}
The pair $(A-(\rho/2)\Id_n, B)$ is stabilizable.
\end{assumption}
Given that \textit{Assumptions \ref{ass:detectability}-\ref{ass:stabilizability}} hold, for infinite horizon LQG systems governed by \eqref{SLQConvsys} and \eqref{costInfHorizon}, the optimal control action is given by \eqref{LQGCntrl}, where the steady state Riccati matrix $\Pi$ satisfies an algebraic Riccati equation given by
\begin{equation}
\rho \Pi= \Pi A + A^\intercal \Pi - (\Pi B + N)R^{-1}( B^\intercal \Pi + N^\intercal ) + Q,\\
\end{equation}
and the steady state offset vector $s(t)$ satisfies the differential equation
\begin{equation}
\rho s(t)=\dot{s}(t) +  [(A- BR^{-1} N^\intercal)^\intercal- \Pi B R^{-1} B^\intercal ] s(t)+ \Pi(b(t)+ BR^{-1}n)+N R^{-1}n-\eta,
\end{equation}
see e.g. \cite{Anderson1990}.

\section{Major Minor LQG Mean-Field Game Systems}\label{sec:MMLQGMFG}
In this section, we use the convex analysis method introduced in Section \ref{sec:LQGproblems} to derive the best response strategies for major minor LQG MFG (MM LQG MFG) problems addressed in \cite{Huang2010}. An important distinguishing feature of our approach is that we impose no assumption on the evolution of the mean-field beforehand.

\subsection{Dynamics}

We consider a large population $N$ of minor agents denoted by $\mc{A}_i, i \in \mfN:=\{1,\dots,N\},\,N<\infty$, with a major agent denoted by $\mc{A}_0$, where agents are subject to stochastic linear dynamics and quadratic cost functionals. Each agent is coupled to other agents through their dynamics and cost functional. Both types of coupling occur through  the average state of minor agents, i.e. the empirical mean-field.

Major and minor agents' states are assumed, respectively, to satisfy
\begin{align}
dx_t^0 &= [A_0\,x_t^0 +F_0\, x^{(N)}_t+ B_0\, u_t^0 + b_0(t)]\,dt + \sigma_0 \,dw^0_t, \label{MajorDyn}\\
dx_t^i &= [A_k\, x_t^i +F_k\, x^{(N)}_t + G_k x^0_t + B_k\, u_t^i + b_k(t)]\,dt + \sigma_k \,dw^i_t, \label{MinorDyn}
\end{align}
for $t \in\mfT$, $i \in \mfN$, and the subscript $k$, $k \in \mfK :=\{1,\dots,K\},\, K\leq N$,  denotes the type of a minor agent. Here $x^i_t \in \mathbb{R}^n,~ i \in \mfN_0:=\{0,\dots,N\}$, are the states, $(u^i_t)_{t\in\mfT} \in \mathbb{R}^m,~ i \in \mfN_0$, are the control inputs, $w =\lbrace (w^i_t)_{t\in\mfT}, i \in \mfN_0 \rbrace$ denotes $(N+1)$ independent standard Wiener processes in $\mathbb{R}^r$, where $w^i$ is progressively measurable with respect to
the filtration $\mc{F}^w \coloneqq (\mc{F}_t^{w})_{t\in \mfT}$. Moreover, $x^{(N)}_t := \frac{1}{N} \sum_{i=1}^{N} x^i_t$ denotes the average state of the minor agents. All matrices in \eqref{MajorDyn} and \eqref{MinorDyn} are constant and of appropriate dimension; vectors $b_0(t)$, and $b_k(t)$ are deterministic functions of time.

\begin{assumption} \label{IntialStateAss}
The initial states $\lbrace x^i_0,~ i \in \mfN_0 \rbrace$ defined on $(\Omega, \mathcal{F}, P)$ are identically distributed, mutually independent and also independent of $\mathcal{F}^{w}$, with $\mathbb{E}x^i_0=0$. Moreover, $\sup_{i} \mathbb{E}\Vert x^i_0\Vert^2 \leq c < \infty $, $ i \in \mfN_0$, with $c$ independent of $N$.
\end{assumption} 

\subsubsection{Agents types}
Minor agents are given in $K\le N$ distinct types.
The index set $\mc{I}_k$ is defined as
\begin{equation*}
\mc{I}_k = \lbrace i : \theta_i = \theta^{(k)},~ i \in \mfN \rbrace , \quad k \in \mfK,
\end{equation*}
where $\theta^{(k)}\in\Theta$ for $k\in\mfK$ and $\Theta$ is the parameter set.
The cardinality of $\mc{I}_k$ is denoted by $N_k = |\mc{I}_k|$. Then, $\pi^{N} = (\pi_{1}^{N},...,\pi_{K}^N),~ \pi_k^N = \tfrac{N_k}{N}, ~ k \in \mfK$, denotes the empirical distribution of the parameters $(\theta_1,...,\theta_N)$ obtained by sampling the initial conditions  independently
of the Wiener processes of all agents.
\begin{assumption} \label{ass:EmpiricalDistLimit}
There exists $\pi$ such that $\displaystyle\lim_{N \rightarrow \infty} \pi^N = \pi $ a.s.
\end{assumption}

\subsubsection{Control $\sigma$-Fields}

For any finite $T$, we denote (i) the natural filtration generated by $\mc{A}_i$'s state $(x^i_{t})_{t\in\mfT}$  by $\mc{F}^i\coloneqq (\mc{F}_{t}^i)_{t\in\mfT}$, $ i \in \mfN$, (ii) the natural filtration generated by the major agent's state $(x^0_{t})_{t\in\mfT}$ by $\mc{F}^0:=(\mc{F}^0_t)_{t\in\mfT}$, and (iii) the natural filtration generated by the states of all agents $((x_t^i)_{i \in \mfN_0})_{t\in\mfT}$ by $\mc{F}^g:=(\mc{F}^g_{t})_{t\in\mfT}$.
Next, we introduce three admissible control sets. 

\begin{assumption} [Major Agent Linear Controls]\label{ass:MajorControl}
 For the major agent $\mc{A}_0$, the set of control inputs  $\mc{U}^{0}$ is defined to be the collection of linear state feedback control laws $u^0$ that are adapted to  $\mc{F}^0$ such that $\mb{E}[\int_0^T u_t^{0\intercal}u_t^0\, dt] < \infty$. 
\end{assumption}

\begin{assumption}[Minor Agent Linear Controls] \label{ass: MinorContrAction}
 For each minor agent $\mathcal{A}_i,\, i \in \mfN$, the set of control inputs $\mc{U}^{i}$ is defined to be the collection of linear state feedback control laws adapted to the filtration $\mc{F}^{i,r} \coloneqq (\mc{F}_{t}^{i,r})_{t\in\mfT}$, where $\mc{F}^{i,r} \coloneqq \mc{F}^i \vee \mc{F}^0$, such that  $\mb{E}[\int_0^T u_t^{i\intercal}  u_t^i\, dt] < \infty$.
\end{assumption}
The set of control inputs $\mc{U}^{N}_g$ is defined to be the collection of linear state feedback control laws adapted to $\mc{F}^g$, such that  $\mb{E}[\int_0^T u_t^{i\intercal}  u_t^i\, dt] < \infty, i \in \mfN_0$. 
\subsection{Cost functionals}
We denote the seminorm $\Vert a \Vert_B^2\coloneqq a^\intercal B a$, where $a$ and $B$ are a vector and a matrix of appropriate dimensions, respectively. We also denote $u^{-0} \coloneqq  (u^1,\dots,u^N)$ and $u^{-i} \coloneqq  (u^0,\dots,u^{i-1}, u^{i+1},\dots, u^N)$. Then the major agent's cost functional for the (finite) large population  stochastic game is
\begin{multline} \label{MajorCostLrgPop}
J_{0}^{N}(u^0, u^{-0}) = \tfrac{1}{2} \mathbb{E} \Big[ \Vert x^0_T - \Phi^{(N)}_T\Vert ^2_{\hQ_0}  + \int_{0}^{T}  \Big \lbrace \Vert x^0_t - \Phi^{(N)}_t\Vert ^2_{Q_0} \allowdisplaybreaks\\+ 2{\big(x_t^0-\Phi^{(N)}_t\big)}^\intercal N_0 u_t^0+ \Vert u^0_t \Vert_{R_0}^2 \Big \rbrace dt \Big],
 \end{multline}
where
\begin{equation}
 \Phi^{(N)}_t := H_0\,x^{(N)}_t + \eta_0.
\end{equation}
\begin{assumption} \label{ConvexityCondMajorLQGMFG}
$\hQ_0  \geq 0$, $R_0 >0$, and $Q_0-N_0 R_0^{-1} N_0^\intercal\geq 0$. 
\end{assumption}
A minor agent $\mc{A}_i$, $i\in\mfN$, such that $i\in\mc{I}_k$, cost functional for the  (finite) large population stochastic game is
\begin{multline}
J_{i}^{N}(u^i, u^{-i}) = \tfrac{1}{2}\,
\mathbb{E} \left[
\Vert x^i_T - \Psi^{(N)}_T \Vert_{\hQ_k}^{2}+ \int_{0}^{T} \Big \lbrace \Vert x^i_t - \Psi^{(N)}_t \Vert_{Q_k}^{2} \allowdisplaybreaks
\right.
\\
+ 2{\big(x_t^i- \Psi^{(N)}_t\big)}^{\intercal} N_k u_t^i + \Vert u^i_t \Vert_{R_k}^2 \Big\rbrace \,dt \Bigg],  \label{MinorCostFinitePop}
\end{multline}
where
\begin{equation}\label{PSI}
\Psi^{(N)}_t ~:=~ H_k\, x^0_t + \hH_k  \, x^{(N)}_t + \eta_k.
\end{equation}
\begin{assumption}\label{convexityCondMinorLQGMFG}
$\hQ_k \geq 0$, $R_k >0$, and $Q_k - N_k R_k^{-1} N_k^\intercal\geq 0$, for $k \in \mfK  $.
\end{assumption}
Note that the major and minor agents are coupled to one another through $x^{(N)}_t$ which appears in both the dynamics \eqref{MajorDyn}-\eqref{MinorDyn}  and cost functionals \eqref{MajorCostLrgPop}-\eqref{MinorCostFinitePop}.

\subsection{Solutions to Major Minor LQG MFG Problems}

In the mean-field game methodology with a major agent \cite{NourianSiam2013,Huang2010}, we first solve the infinite population version of the stochastic game. This is achieved by replacing the average terms in the finite population dynamics and cost functional  by their infinite population limit called the mean-field. Then in the case of LQG MFG systems, we extend the major agent's state to include the mean-field and extend the minor agent's state to include the major agent's state and mean-field. The result is a stochastic control problem for each agent (rather than a stochastic game), but the problems are linked through the major agent's state and the mean-field. Next, we solve these stochastic control problems and the resulting fixed point problem to obtain a consistent mean-field. Finally, we apply the infinite population best response strategies to the finite population system and demonstrate that this results in an $\epsilon$-Nash equilibrium \cite{Huang2010}.


By introducing more general cost functionals and state dynamics we generalize Theorem 10
in  \cite{Huang2010}, which provides the control laws that yield the infinite population Nash equilibrium and the resulting finite population $\epsilon$-Nash equilibrium. We present a new proof using a convex analysis method  to derive the best response strategy for every agent without any explicit assumption on the mean-field dynamics -- which contrasts with the classical approaches (see e.g. \cite{Huang2010, KizilkaleTAC2016, FirooziCainesTAC,FJC-arXiv2018,FPC-arXiv2018}).

\begin{assumption} \label{MFEquationSolAss}
The parameters in \eqref{MajorDyn}-\eqref{MinorDyn}, \eqref{MajorCostLrgPop}-\eqref{PSI}, belong to a non-empty set which yields the existence and uniqueness of the solutions ($\Pi_0$, $s_0$, $\Pi_k$, $s_k$, $\bar{A}_k$, $\bar{G}_k$, $\bar{m}_k$) to the resulting set of major-minor mean field fixed-point equations consisting of \eqref{ConsistencyEq1} and \eqref{ConsistencyEq2}.  
 \end{assumption}

\begin{theorem}
[$\epsilon$-Nash Equilibrium for LQG MFG Systems] \label{thm:MMLQGMFG}
Given that \textit{Assumptions \ref{IntialStateAss}-\ref{MFEquationSolAss}} hold, the system of equations \eqref{MajorDyn}-\eqref{PSI} together with the mean-field consistency equations \eqref{ConsistencyEq1}-\eqref{ConsistencyEq2} generate a set of control laws $\mc{U}_{MF}^{\infty}$, with finite sub-families ${\mathcal{U}}_{MF}^{N} ~\triangleq~ \{{u}^{i,*};  i \in \mfN_0\}$, $1 \leq N < \infty$, given by \eqref{SextGMFGLatentSLQconvControl} and \eqref{SextGMFGLatentSLQconvControl-minor}, such that

\begin{enumerate}
\item[(i)]  The set of control laws $\mc{U}_{MF}^{\infty}$ yields the infinite population Nash equilibrium, i.e.,
\begin{equation*}
 J_i^{\infty}(u^{i,*}, u^{-i,*}) = \inf_{u^i \in \mc{U}^{\infty}_{g} }J_i^{\infty} (u^i, u^{-i,*}).
  \end{equation*}
 \item[(ii)] All agent systems $\mc{A}_i, ~i \in \mfN_0$, are second order stable.
\item[(iii)] The set of control laws $\mc{U}_{MF}^{N}$, form an $\epsilon$-Nash equilibrium for all $\epsilon$, i.e., for all $\epsilon>0$, there exists $N(\epsilon)$ such that for all $N \geq N(\epsilon)$ 
\begin{equation*}
J_i^{N}(u^{i,*},  u^{-i,*})-\epsilon \leq\inf_{u^i \in\mc{U}_{g}^{N} } J_i^{N}(u^i, u^{-i,*}) \leq  J_i^{N}( u^{i,*}, u^{-i,*}).
\end{equation*}
\end{enumerate}
 \end{theorem}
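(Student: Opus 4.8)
The plan is to carry out the three standard stages of the mean-field game program --- solve the infinite-population best-response problems, impose consistency to pin down the mean-field, and transfer the strategies back to the finite-population game --- but to execute the first stage with the convex analysis machinery of \Cref{sec:LQGproblems} rather than dynamic programming, so that no structure need be posited for the mean-field in advance. First I would freeze the mean-field as an undetermined exogenous process $\bar x_t$ in \eqref{MajorDyn}-\eqref{MinorDyn} and \eqref{MajorCostLrgPop}-\eqref{PSI}. From the viewpoint of a single minor agent of type $k\in\mfK$ the limiting processes $x^0_t$ and $\bar x_t$ are exogenous (its own $1/N$ weight vanishes), so its problem is exactly of the form \eqref{SLQConvsys}-\eqref{SLQconvCost} with the coupling terms $F_k\bar x_t + G_k x^0_t$ absorbed into the offset; \Cref{thm:LQGGatDeriv,thm:LQGoptCntrlRaw,thm:optCntrlAct} then deliver the best response \eqref{SextGMFGLatentSLQconvControl-minor} together with $\Pi_k,s_k$. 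The major agent is more delicate: a perturbation of $u^0$ moves $x^0_t$, which feeds into every minor state through $G_k x^0_t$ and hence into $\bar x_t$, which in turn re-enters the major cost through $\Phi^{(N)}_t = H_0 x^{(N)}_t + \eta_0$. I would therefore apply the convex analysis computation to the \emph{extended} state $(x^0_t,\bar x_t)$, whose coupled dynamics are fixed by the averaged minor closed loop, yielding \eqref{SextGMFGLatentSLQconvControl} and $\Pi_0,s_0$.

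The mean-field equations \eqref{ConsistencyEq1}-\eqref{ConsistencyEq2} are then \emph{derived} rather than assumed: substituting the minor best responses into \eqref{MinorDyn}, averaging over $i\in\mfN$, and letting $N\to\infty$ under \Cref{IntialStateAss,ass:EmpiricalDistLimit} forces the population average to satisfy a closed linear SDE parametrized by $(\bar A_k,\bar G_k,\bar m_k)$, which must coincide with the nominal $\bar x_t$ used above; this self-consistency is precisely \eqref{ConsistencyEq1}-\eqref{ConsistencyEq2}, and \Cref{MFEquationSolAss} supplies a solution. Part (i) now follows from convexity: under \Cref{ConvexityCondMajorLQGMFG,convexityCondMinorLQGMFG} each cost is convex, proper and continuous, so by \Cref{coro:convexAnalysis} the vanishing of the G\^ateaux derivative is \emph{necessary and sufficient} for the minimizer, whence the constructed laws are genuine best responses and jointly give $J_i^\infty(u^{i,*},u^{-i,*})=\inf_{u^i}J_i^\infty(u^i,u^{-i,*})$.

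For part (ii), with the Riccati, offset and mean-field coefficients in hand, every agent's closed loop is a linear SDE with bounded deterministic coefficients driven by the extended state and the independent Wiener processes; a Gronwall estimate on $\mb{E}\Vert x^i_t\Vert^2$ together with the uniform moment bound of \Cref{IntialStateAss} yields $\sup_{t\in\mfT}\sup_{i,N}\mb{E}\Vert x^i_t\Vert^2<\infty$, i.e.\ second-order stability uniform in $N$. In part (iii) the right-hand inequality is immediate, since $u^{i,*}\in\mc{U}^N_g$ is one admissible control so the infimum cannot exceed $J_i^N(u^{i,*},u^{-i,*})$; the content is the left-hand inequality.

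The hard part will be the left-hand inequality of part (iii). Its engine is a uniform estimate on the discrepancy $\tilde x_t \coloneqq x^{(N)}_t - \bar x_t$ between the finite-population empirical mean-field and its limit: using the uniform moment bound of part (ii) and the zero-mean, mutually independent initial conditions and independent noises of \Cref{IntialStateAss}, I expect $\sup_{t\in\mfT}\mb{E}\Vert\tilde x_t\Vert^2 = O(1/N)$, which then propagates through the quadratic costs to give $\vert J_i^N(u^i,u^{-i,*}) - J_i^\infty(u^i,u^{-i,*})\vert = O(N^{-1/2})$ and closes the argument against the infinite-population optimality of part (i). Two points make this delicate. First, in the major-minor setting $\bar x_t$ is \emph{itself stochastic} --- adapted to the major agent's filtration --- so $\tilde x_t$ cannot be controlled by a deterministic law of large numbers and must be handled through the independence structure of the minor noises conditional on $\mc{F}^0$. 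Second, the cost gap must be bounded \emph{uniformly over all admissible deviations} $u^i$, and here the major and minor cases genuinely differ: a minor deviation perturbs $x^{(N)}_t$ only by $O(1/N)$ and is negligible, whereas a major deviation perturbs every minor state through $G_k x^0_t$ and therefore shifts $x^{(N)}_t$ by $O(1)$ --- the effect that the extended-state formulation of the major problem was built to capture, so that the finite-population deviation's influence on $x^{(N)}_t$ converges to its influence on $\bar x_t$.
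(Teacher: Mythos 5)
Your overall program (infinite-population best responses via convex analysis, then consistency, then the $\epsilon$-Nash transfer) matches the paper's, and your sketches of parts (ii) and (iii) are unobjectionable as far as they go --- the paper itself omits these, deferring to the standard approximation argument of \cite{Huang2010}. The genuine gap is in your minor-agent step. In the major--minor setting the mean field $\bar x_t$ and the major state $x^0_t$ are \emph{stochastic} processes adapted to the major agent's filtration (you note this yourself in part (iii)), whereas Theorems~\ref{thm:LQGGatDeriv}--\ref{thm:optCntrlAct} require the offsets $b(t)$, $\eta$, $n$ in \eqref{SLQConvsys}--\eqref{SLQconvCost} to be deterministic: the offset equation \eqref{LQGOffsetEq} is an ODE for a deterministic $s(t)$, and the ansatz $p_t=\Pi(t)x^*_t+s(t)$ exhausts the randomness of the adjoint through $x^*_t$ alone. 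If you absorb $F_k\bar x_t+G_k x^0_t$ (and the $H_k x^0_t+\hH_k\bar x_t$ terms of \eqref{PSI}) into the offsets, those theorems no longer apply; the offset would have to become an adapted process solving a BSDE, and the reduction could never produce the extended-state feedback \eqref{SextGMFGLatentSLQconvControl-minor}, whose $\Pi_k$ is a block matrix acting on $(x^i_t,x^0_t,\bar x_t)$, as the output you claim. The paper's proof avoids this precisely by extending the minor agent's state to $X^i_t=(x^i_t,x^0_t,\bar x_t)$ in \eqref{minorExtDynPert}, so that every stochastic coupling sits inside the state and the single-agent theorems apply verbatim; your minor-agent step needs that same extension, not offset absorption.

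Second, your major-agent step quietly reinstates the assumption the paper is built to avoid. By declaring the dynamics of $(x^0_t,\bar x_t)$ to be ``fixed by the averaged minor closed loop,'' you posit a priori that the mean field obeys a closed linear SDE $d\bar x_t=(\bar A\,\bar x_t+\bar G\,x^0_t+\bar m)\,dt$ with undetermined coefficients --- the classical route of \cite{Huang2010} --- and you also presuppose the minor best responses, which themselves depend on the major agent's closed loop (through $\tilde A_k$ in \eqref{sysMatMinor}), a circularity your ordering does not break. The paper proceeds differently: step (i.b) averages the minor dynamics under \emph{arbitrary} admissible controls, yielding the perturbed mean-field equation \eqref{MF_perturbed_eps0} in which the control mean field $\bar u_t$ enters the major agent's extended dynamics \eqref{majorExtDynPert} as an unspecified exogenous input; both agents' problems are solved with $\bar u_t$ kept symbolic, and only in step (iii) is $\bar u_t$ identified from the averaged minor best responses \eqref{minorCntrlMF}, at which point $\bar A_k,\bar G_k,\bar m_k$ and the Riccati/offset systems close simultaneously into the fixed point \eqref{ConsistencyEq1}--\eqref{ConsistencyEq2}. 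Beyond the methodological point, the two perturbation analyses are not obviously the same problem --- in yours the population reacts to a major deviation through the minor feedback laws, in the paper's only through the dynamics with $\bar u_t$ held fixed --- so if you keep your route you must additionally show it produces the same consistency equations rather than simply cite them.
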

\begin{proof}
The proof consists of two parts:
\begin{enumerate}
    \item[(I)] Show that the set of control laws $\mc{U}_{MF}^{\infty}$ forms a Nash equilibrium for the infinite population system.
    \item[(II)] Show that when a finite subset of the control laws $\mc{U}_{MF}^{N}$ is applied to the finite population system, all agent systems are second order stable, and the control laws yield an $\epsilon$-Nash equilibrium.
\end{enumerate}
Part (II) is established by the standard approximation analysis parallel to  that in \cite{Huang2010} and is omitted here. In this section we prove part (I) by using a novel convex analysis approach to retrieve the set of best response strategies $\mc{U}_{MF}^{\infty}$ which yields the Nash equilibrium. This part is further broken into three parts:

\noindent \textit{\textbf{(i)Major Agent: Infinite Population}}
For the infinite population problem for the major agent, we follow the steps below:
\begin{itemize}
\item[(i.a)] Perturb the major agent's control action by $\eps_0$ in the direction $\omega^0\in\mc{U}^0$;
\item[(i.b)] Follow the effect of the major agent's perturbed control action on its own state and every minor agent's state and obtain the resulting perturbed mean-field $\bar{x}^{\eps_0}_t$  (\Cref{fig:majorPert} illustrates the propagation of the perturbation through the system states.);

\begin{figure}
\vspace{-30pt}
\centering
\scalebox{1.2}{
\begin{tikzpicture}[node distance = 2cm]

    \node [block2] (u0) {$u^{0,\epsilon_0}$};
    \node [block2, right of=u0, xshift=0.3cm] (x0) {$x^{0,\epsilon_0}$};
    \node [block2, right of=x0, xshift=0.3cm] (xi) {$x^{i,\epsilon_0}$};
    \node [block2, right of=xi, xshift=1cm] (xbar) {$\bar{x}^{\epsilon_0}$};
    
    \path [line, thick] (u0.0) -- (x0.180);
    
    \path [line, thick] (x0.0) -- (xi.180);
    
    \path [line, thick] (xi.0)([xshift=0.0cm]xi.east) -- (xbar.180)node [xshift=-1cm,above, text width = 1.3cm, text centered] (TextNode) {\tiny Aggregation};

    \path [line, thick] (xbar.290) -| node [yshift = -0.5cm,xshift = 0cm]{} ([xshift=0cm,yshift=-0.5cm]xbar.290) -| (x0.270);
    
    \path [line, thick] (xbar.260) -| node [yshift = -0.3cm,xshift = 0cm]{} ([xshift=0cm,yshift=-0.3cm]xbar.260) -| (xi.270);
 
\end{tikzpicture}}
\caption{Propagation of the perturbation of the major agent's control action through the system states.}\label{fig:majorPert}
\end{figure}
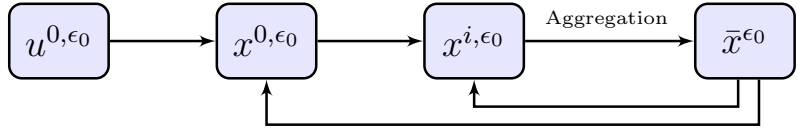

\item[(i.c)] Extend the major agent's state to include the joint dynamics of the  major agent's state and the mean-field;
\item[(i.d)] Use Theorems \ref{thm:LQGGatDeriv}-\ref{thm:optCntrlAct} to obtain the best response strategy for the major agent.
\end{itemize}
We now carry out these steps in turn.

\underline{\bf{Step (i.a)}}: The major agent's state $x^{0,\eps_0}_t$ subject to the perturbed control action $u^0_t+\eps_0\,\omega^0_t$, for $\omega^0_t \in \mc{U}^{0}$, in the infinite population limit satisfies
\begin{equation}
dx^{0,\eps_0}_t = [A_0\, x^{0,\eps_0}_t + F_0^{\pi} \,\bar{x}_t^{\eps_0} + B_0\, (u^0_t + \eps_0 \omega^0_t)+ b_0(t)]dt + \sigma_0\, dw^0_t,
\end{equation}
where $F_0^{\pi}= \pi \otimes F_0 \coloneqq \left[\pi_1 F_0, \dots, \pi_K F_0 \right]$, and $\bar{x}_t^{\eps_0}$ denotes the perturbed mean-field resulting from the major agent's perturbed control action.

\underline{\bf{Step (i.b)}}: The resulting minor agent's state $x^{i, \epsilon_0}$, subject to an arbitrary control $u^i_t\in\mc{U}^{i}$, due to the major agent's perturbation satisfies
\begin{equation} \label{MinorPertbdMajor}
dx^{i, \epsilon_0}_t = [A_k\, x^{i,\epsilon_0}_t dt + F^{\pi}_k\, \bar{x}_t^{\epsilon_0} dt + G_k \,x^{0,\epsilon_0}_t  + B_k \,u^i_t + b_k(t)]dt + \sigma_k\, dw_t^i,
\end{equation}
for all $i\in\mfN$, where $F_k^{\pi}= \pi \otimes F_k \coloneqq \left[\pi_1 F_k, \dots, \pi_K F_k \right]$.
The empirical average of the states of the minor agents of subpopulation $k\in\mfK$ is defined as
\begin{align}
x_t^{(N),k} = \tfrac{1}{N_k} \sum_{i\in\mc{I}_k} x_{t}^{i}.
\end{align}

Subsequently, we define the vector $x^{(N)}_t= [x_t^{(N),1}, x_t^{(N),2},\dots, x_t^{(N),K}]$, where the pointwise in time limit (in quadratic mean) of $x^{(N)}_t$, if it exists, is called the state mean-field of the system and is denoted by $\bar{x}_t^\intercal= [(\bar{x}_t^1)^\intercal, ..., (\bar{x}_t^K)^\intercal]$.

Similarly, the empirical average of the control actions of the minor agents of subpopulation $k\in\mfK$ is defined as
\begin{align}
u_t^{(N),k} = \tfrac{1}{N_k} \sum_{i\in\mc{I}_k} u_{t}^{i}.
\label{eqn:finitepop-meancontrol}
\end{align}

We define the vector $u^{(N)}_t= [u_t^{(N),1}, u_t^{(N),2},\dots, u_t^{(N),K}]$, where the pointwise in time limit (in quadratic mean) of $u^{(N)}_t$, if it exists, is called the control mean-field of the system and is denoted by $\bar{u}_t^\intercal= [(\bar{u}_t^1)^\intercal, ..., (\bar{u}_t^K)^\intercal]$.

We denote the perturbed mean-field for subpopulation $k\in\mfK $ by $(\bar{x}^{k, \eps_0}_t)$. Taking the average of \eqref{MinorPertbdMajor} over $i\in\mc{I}_k$, and then its $L^2$ limit as $N_k\to\infty$, we obtain 
\begin{equation}
 d\bar{x}^{k,\eps_0}_t =
  \left[(A_k\,\mathbf{e}_k+F^{\pi}_k)\,\bar{x}_t^{\eps_0} + G_k \,x^{0,\eps_0}_t + B_k\, \bar{u}_t^{(k)} + b_k(t)
 \right] dt,
\end{equation}
where $\mathbf{e}_k = [0_{n \times n}, ..., 0_{n \times n}, \Id_n, 0_{n \times n}, ..., 0_{n \times n}]$, which has the $n \times n$ identity matrix $\Id_n$ at the $k$th block.
Next, we stack the perturbed subpopulation mean-fields into the perturbed mean-field vector $(\bar{x}_t^{\eps_0})^\intercal := [(\bar{x}^{1,\eps_0}_t)^\intercal, \dots, (\bar{x}^{K,\eps_0}_t)^\intercal]$. Hence $\bar{x}^{\eps_0}$ satisfies the SDE
\begin{equation}\label{MF_perturbed_eps0}
d\bar{x}_t^{\eps_0} = \left(\br{A}\,\bar{x}_t^{\eps_0} + \br{G}\, x^{0,\eps_0}_t + \br{B}\, \bar{u}_t + \br{m}(t)\right)dt,
\end{equation}
where
as described below \eqref{eqn:finitepop-meancontrol}, $\bar{u}_t^\intercal= [(\bar{u}_t^1)^\intercal,\dots, (\bar{u}_t^K)^\intercal]$, and
\begin{equation}\label{MFmatrices}
\br{A} = \begin{bmatrix}
A_1\mathbf{e}_1+ F^{\pi}_1\\
\vdots \\
A_K\mathbf{e}_K+F^{\pi}_K
\end{bmatrix},
\quad
\br{G} = \begin{bmatrix}
G_1 \\
\vdots\\
G_K
\end{bmatrix},
\quad
\br{B} = \begin{bmatrix}
B_1 & &0\\
        & \ddots &\\
       0 & & B_K
\end{bmatrix},
\quad
\br{m}(t) = \begin{bmatrix}
b_1(t)\\
\vdots\\
b_K(t)
\end{bmatrix}.
\end{equation}

\underline{\bf{Step (i.c)}}: We extend the perturbed major agent's state with the perturbed  mean-field to form $(X_t^{0, \eps_0})^\intercal := \left[(x^{0, \eps_0}_t)^\intercal,\,\, (\bar{x}^{\eps_0}_t)^\intercal \right]$, which then satisfies the SDE
\begin{equation}\label{majorExtDynPert}
dX^{0, \epsilon_0}_t = \left(\tilde{A}_0 \,X^{0, \epsilon_0}_t + \mb{B}_0\, u^0_t + \tilde{B}_0\, \bar{u}_t + \epsilon_0\, \mb{B}_0 \,\omega^0_t + \tilde{M}_0 \right) dt + \Sigma_0 \,dW^0_t,
\end{equation}
where
\begin{subequations}
\begin{gather}
 \tilde{A}_0 = \left[ \begin{array}{cc}
A_0 & F_0^{\pi} \\
\br{G} &  \br{A}
\end{array} \right]\!,
\quad \mb{B}_0=\left[ \begin{array}{c} B_0 \\ 0  \end{array}\right]
\!,
\quad
 \tilde{B}_0 = \left[ \begin{array}{c} 0 \\ \br{B}  \end{array}\right]\!, \\
\tilde{M}_0(t) = \left[ \begin{array}{c} b_0(t) \\ \br{m}(t)  \end{array}\right]\!,
\quad
\Sigma_0 = \left[ \begin{array}{cc}
\sigma_0 & 0 \\
0 &  0
\end{array} \right]\!,
\quad W^0_t = \left[ \begin{array}{c}
w^0_t\\
0
\end{array} \right]\!.\label{sysMatMajor}
\end{gather}
\end{subequations}

Given the major agent's perturbed extended state, the corresponding cost functional for the infinite population limit is
\begin{multline}\label{majorExtCost}
J_0^{\infty}(u^0+\eps_0\omega^0) = \tfrac{1}{2}\mb{E} \bigg [  (X_T^{0,\eps_0})^\intercal \mb{G}_0 X_T^{0,\eps_0} + \int_0^T \Big \{(X_s^{0,\eps_0})^\intercal \mb{Q}_0 X_s^{0,\eps_0} \\+ 2(X_s^{0,\eps_0})^\intercal \mb{N}_0 (u_s^0 +\eps_0 \omega^0_s )+ (u^0_s + \eps_0 \omega^0_s)^\intercal R_0 (u^0_s+\eps_0\omega^0_s)\\-2(X_s^{0,\eps_0})^\intercal \bar{\eta}_0 -2 (u^0_s+ \eps_0 \omega^0_s)^\intercal\bar{n}_0 \Big \} ds \bigg],
\end{multline}
where the corresponding weight matrices are
\begin{subequations}
\begin{gather}
\mathbb{G}_0 = \left[\Id_{n}, -H_0^{\pi}\right ]^\intercal \hQ_0   \left [\Id_{n}, -H_0^{\pi}\right],\quad  \mathbb{Q}_0 = \left[\Id_{n}, -H_0^{\pi}\right ]^\intercal Q_0 \left[\Id_{n}, -H_0^{\pi}\right], \\
\mathbb{N}_0 = \left[\Id_{n}, -H_0^{\pi}\right ]^\intercal  N_0, \quad \bar{\eta}_0 = \left[\Id_{n}, -H_0^{\pi}\right ]^\intercal  Q_0 \eta_0, \quad \bar{n}_0 = N_0^\intercal \eta_0,\\
H_0^{\pi} = \left[\pi_1 H_0, \dots, \pi_K H_0\right].
\label{majorExtCostMat}
\end{gather}
\end{subequations}

Following along the lines of the proof for \Cref{thm:LQGGatDeriv} (with $\rho =0$), the G\^ateaux derivative $\mcD{J_0^{\infty}(u^{0})}$ of the major agent's cost functional in the direction of $\omega^0 \in \mc{U}^0$ is given by
 \begin{multline}
\langle \mcD{J_0^{\infty}(u^0)}, \omega^0 \rangle = \mathbb{E} \bigg [\int_0^T (\omega_t^0)^\intercal \bigg\{ \mb{N}_0^\intercal X^{0}_t + R_0 u^0_t - \bar{n}_0\\+ \mb{B}_0^\intercal\Big(e^{-\tilde{A}_0^\intercal t} M_t^0 -\int_0^t e^{\tilde{A}_0^\intercal(s-t) } ( \mb{Q}_0 X^{0}_s + \mb{N}_0 u^0_s -\bar{\eta}_0)ds \Big )  \bigg  \} dt\bigg ],
 \end{multline}
where
\begin{equation}
M_t^0 =  \mathbb{E} \Big[ e^{\tilde{A}_0^\intercal T} \mb{G}_0 X^{0}_ T +\int_0^T e^{\tilde{A}_0^\intercal s } ( \mb{Q}_0 X^{0}_s + \mb{N}_0 u^0_s-\bar{\eta}_0)ds \Big | \mc{F}^0_t \Big],
 \end{equation}
 and the major agent's  unperturbed extended state $X^0_t$ is given by \eqref{majorExtDynPert} where $\eps_0$ is set to zero, i.e.
 \begin{equation}\label{majorExtDynUnPert}
dX^{0}_t = \left(\tilde{A}_0 X^{0}_t + \mb{B}_0 u^0_t + \tilde{B}_0 \bar{u}_t + \tilde{M}_0 \right) dt + \Sigma_0 dW^0_t.
\end{equation}

\underline{\bf{Step (i.d)}}: Since $\mc{D} J_0^{\infty}(u^0)$ has the same structural form as $\mcD{J(u)}$ in \eqref{LQGGatDerivFinal}, from the proof of \Cref{thm:LQGoptCntrlRaw}, the optimal control action for the major agent in the infinite population limit is
\begin{equation}\label{CtrlMajor}
 u^{0,*}_t = - R_0^{-1} \bigg [ \mb{N}_0^\intercal X^{0,*}_t - \bar{n}_0 +  \mb{B}_0^\intercal   \Big ( e^{-\tilde{A}_0^\intercal t}M_t^0 -\int_0^t e^{\tilde{A}_0^\intercal(s-t) }  (\mb{Q}_0 X^{0,*}_s + \mb{N}_0 u^{0,*}_s-\bar{\eta}_0 )ds\Big ) \bigg ].
 \end{equation}
 
Following the steps in the proof of \Cref{thm:optCntrlAct}, we define the major agent's adjoint process $(p^0_t)_{t\in\mfT}$ by
\begin{align}\label{majorAdjoint}
p^0_t = e^{-\tilde{A}_0^\intercal t}M_t^0 -\int_0^t e^{\tilde{A}_0^\intercal(s-t) }  (\mb{Q}_0 X^{0,*}_s + \mb{N}_0 u^{0,*}_s-\bar{\eta}_0 ).
\end{align}
and adopt the ansatz
 \begin{equation}\label{majorAdjointAnsatz}
 p^0_t = \Pi_0(t)\, X^{0,*}_t + s_0(t),
 \end{equation}
where $\Pi_0(t)$ and $s_0(t)$ are deterministic matrices of appropriate dimension. This  provides us with the state feedback control action for the major agent given by
 \begin{equation}\label{SextGMFGLatentSLQconvControl}
 u^{0,*}_t = - R_0^{-1} \big [ \mathbb{N}_0^\intercal X^0_t -\bar{n}_0+  \mathbb{B}_0^\intercal \big(\Pi_0(t) X_t^{0,*} + s_0(t) \big) \big].
 \end{equation}
 
Next applying It\^{o}'s lemma to \eqref{majorAdjoint} and using the martingale representation theorem we find that the major agent's adjoint process satisfies the SDE
\begin{equation}\label{majorAdjointDiff}
d{p}^0_t = \left[-(\mb{Q}_0 + \tilde{A}_0^\intercal\, \Pi_0(t))\,X^{0}_t - \tilde{A}_0^\intercal \,s_0(t) \right]dt+ q^0_t \,dW^0_t,
\end{equation}
where $q_t^0 = e^{-\tilde{A}_0^\intercal t} Z_t^0$, and $(Z_t^0)_{t\in\mfT}$ is an $\mc{F}^0$-adapted process such that $M_t^0 = M^0_0 + \int_0^t Z_s^0\,dW_s^0$. 

Moreover, applying It\^{o}'s lemma to \eqref{majorAdjointAnsatz} and \eqref{majorExtDynUnPert}, $p^0_t$ satisfies the SDE
\begin{multline}
 d{p}_t^0 = \bigg[\big(\dot{\Pi}_0 + \Pi_0 \,\tilde{A}_0 - \Pi_0\, \mathbb{B}_0\, R_0^{-1} \mathbb{B}_0^\intercal\, \Pi_0 \big) \,X^0_t  +\Pi_0 \,\tilde{B}_0\, \bar{u}_t
 \\
 +\Pi_0 \big(\tilde{M}_0(t) - \Pi_0\, \mathbb{B}_0 \,R_0^{-1}\, \mathbb{B}_0^\intercal s_0\big)+ \dot{s}_0(t)\bigg] dt + \Pi_0 \,{\Sigma}_0(t) \,dW^0_t. \label{majorAdjointAnsatzDiff}
\end{multline}

Determining $\Pi_0(t)$ and $s_0(t)$ requires knowing $\bar{u}_t$. To determine this, we next specify the optimal control action for every minor agent.

\noindent \textit{\textbf{(ii)Minor Agent: Infinite Population}}
For the minor agent's infinite population problem we follow the steps below:
\begin{itemize}

\item[(ii.a)] Perturb a minor agent's  control action by $\eps_i$ in the direction $\omega^i\in\mc{U}^i$;

\item[(ii.b)] Follow the effect of the perturbed control action on the major agent's and all minor agents' states and determine the perturbed mean-field $\bar{x}^{\eps_i}_t$ (\Cref{fig:minorPert} illustrates the propagation of the perturbation through the system states.);
\begin{figure}
\vspace{-30pt}
\centering
\scalebox{1.1}{
\begin{tikzpicture}[node distance = 2cm]

    \node [block2] (ui) {$u^{i,\epsilon_i}$};
    \node [block2, right of=ui, xshift=0.3cm] (xi) {$x^{i,\epsilon_i}$};
    \node [block1, right of=xi, xshift=0.78cm] (xbar) {$\bar{x}^{\epsilon_i}=\bar{x}$};
    \node [block1, below of=xi, yshift=0.5cm] (x0) {$x^{0,\epsilon_i}=x^0$};
    
    \path [line, thick] (ui.0) -- (xi.180);
    
    \path [line, thick] (xi.0) -- (xbar.180);
    
    \path [line, thick] (xi.270) -- (x0.90);
 
\end{tikzpicture}}
\caption{Propagation of the perturbation of a minor agent's control action through the system states.}\label{fig:minorPert}
\end{figure}
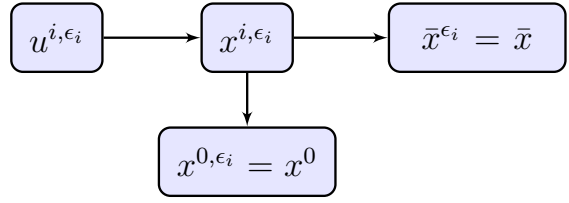

\item[(ii.c)] Extend the minor agents' state to include the joint dynamics of the minor agent, major agent, and  mean-field;
\item[(ii.d)] Use Theorems \ref{thm:LQGGatDeriv}-\ref{thm:optCntrlAct} to obtain the best response strategy for the minor agent.
\end{itemize}
We now carry out these steps in turn.

\underline{\textbf{Step (ii.a)}}: A  minor agent from subpopulation $k$, i.e. $\mc{A}_i,\, i\in \mc{I}_k$,  has a perturbed state $(x^{i,\epsilon_i})_{t\in\mfT}$ subject to the control action $u^i_t+\eps_i \omega^i_t$, where $\omega^i \in \mc{U}^i$, that satisfies the SDE
\begin{equation}\label{minorDynPerturbd}
 dx^{i,\epsilon_i}_t = \left[A_k\,x^{i,\epsilon_i}_t + F^{\pi}_k\, \bar{x}_t^{\epsilon_i} + G_k \,x^{0,\epsilon_i}_t + B_k (u^i_t +\epsilon_i\, \omega^i_t) + b_k(t)\right]dt + \sigma_k\, dw^i_t,
 \end{equation}
where $F^{\pi}_k \coloneqq \pi \otimes F_k$. 

\underline{\textbf{Step (ii.b)}}: To obtain subpopulation $k$'s perturbed mean-field $\bar{x}^{k,\eps_i}_t$, we first take the average of minor agents' states $x_t^{j}$ over subpopulation $k$ where $\mc{A}_j$'s, $i\ne j\in\mc{I}_k$, state satisfies  
\begin{equation}\label{minorDyn_j_Perturbd}
 dx^{j,\epsilon_i}_t = \left[A_k\,x^{j,\epsilon_i}_t + F^{\pi}_k\, \bar{x}_t^{\epsilon_i} + G_k \,x^{0,\epsilon_i}_t + B_k u^j_t + b_k(t)\right]dt + \sigma_k\, dw^j_t,
 \end{equation}
 while $\mc{A}_i$'s state satisfies  \eqref{minorDynPerturbd}, and then we take its limit as $N_k \rightarrow \infty$, to find 
 \begin{equation} \label{MFperturbedMinor}
 d\bar{x}_t^{k,\epsilon_i} = \left[(A_k\, \mathbf{e}_k+ F^{\pi}_k)\, \bar{x}_t^{\epsilon_i} + G_k \,x^{0,\epsilon_i}_t + B_k\, \bar{u}_t^k + b_k(t)\right]dt.
 \end{equation}
 
 Stacking the perturbed subpopulation mean-fields into the vector valued process $\left((\bar{x}_t^{\eps_i})^\intercal := \big[(\bar{x}_t^{1,\eps_i})^\intercal, \dots, (\bar{x}_t^{K, \eps_i})^\intercal \big]\right)_{t\in\mfT}$ and using the above SDE, we find that $\bar{x}_t^{\eps_i}$ satisfies the SDE
 \begin{equation}\label{MFminorPert}
d\bar{x}_t^{\eps_i} =\left( \br{A} \,\bar{x}_t^{\eps_i}  + \br{G} \,x^{0,\epsilon_i}_t + \br{B} \, \bar{u}_t  + \br{m}(t)\right)dt,
\end{equation}
where  the coefficient matrices $\br{A}$, $\br{G}$, $\br{B}$, and $\br{m}(t)$ are given in \eqref{MFmatrices}.

From \eqref{MFperturbedMinor} and \eqref{MFminorPert}, we see that the mean-field, and hence the major agent's dynamics, are not impacted by the perturbation of a minor agent's control action, as heuristically expected, because $\lim_{N_k \rightarrow \infty} \frac{\eps_i \omega^i_t}{N_k} =0$. Therefore $\bar{x}_t^{\eps_i}=\bar{x}_t$ and $x^{0,\eps_i}_t =x^{0}_t$ for any $i \in \mfN$.

\underline{\textbf{Step (ii.c)}}: Next, we extend the  minor agent's state to include the major agent's state and the mean-field, which we assume  to exist. Hence,
$\mc{A}_i$'s perturbed extended state $\left((X^{i, \eps_i}_t)^\intercal ~\coloneqq ~ \left[(x^{i,\eps_i}_t)^\intercal,~(x^0_t)^\intercal,~(\bar{x}_t)^\intercal\right]\right)_{t\in\mfT}$ is  
 \begin{equation}\label{minorExtDynPert}
 dX^{i,\epsilon_i}_t =
 \left(
 \tilde{A}_k\, X^{i,\epsilon_i}_t  + \mb{B}_k\, u^i_t  + \epsilon_i \,\mb{B}_k\, \omega^i_t dt+ \tilde{B}_k\, \bar{u}_t  + \tilde{M}_k(t)
 \right)dt  + \Sigma_k \,dW^i_t,
 \end{equation}
where
\begin{gather}
\tilde{A}_k = \left[ \begin{array}{cc} A_k & [G_k \, \, \, F_k^{\pi}]\\ 0 &\tilde{A}_0- \mathbb{B}_0 R^{-1}_0 \mathbb{N}_0-\mb{B}_0 R_0^{-1} \mb{B}_0^\intercal \Pi_0 \end{array} \right], \quad
 \mb{B}_k = \left[ \begin{array}{c} B_k \\ 0 \end{array}\right], \quad
\tilde{B}_k = \left[ \begin{array}{c} 0 \\ \tilde{B}_0 \end{array}\right],\nonumber\\
 \tilde{M}_k(t) = \left[ \begin{array}{c} b_k(t) \\ \tilde{M}_0(t)-\mb{B}_0 R_0^{-1} \mb{B}_0^\intercal s_0(t) \end{array}\right], \quad
\Sigma_k = \left[ \begin{array}{cc} \sigma_k & 0 \\ 0 & \Sigma_0 \end{array} \right], \quad W^i_t = \left[ \begin{array}{c} w^i_t \\ W^0_t \end{array} \right].\label{sysMatMinor}
\end{gather}

The perturbed infinite population cost functional for $\mc{A}_i,\, i\in\mfN$, is
\begin{multline}\label{minorExtCost}
J_i^{\infty}(u^i+\eps_i \omega^i) = \tfrac{1}{2}\mb{E} \bigg [
(X_T^{i,\eps_i})^\intercal\, \mb{G}_k X_T^{i,\eps_i}
+ \int_0^T \Big \{(X_s^{i,\eps_i})^\intercal \,\mb{Q}_k \,X_s^{i,\eps_i}
\\
+ 2(X_s^{i,\eps_i})^\intercal\, \mb{N}_k\, (u_s^i +\eps_i \omega^i_s)
+ (u_s^i +\eps_i \omega^i_s)^\intercal\, R_k\, (u_s^i +\eps_i \omega^i_s)
\\
-2(X_s^{i,\eps_i})^\intercal \,\bar{\eta}_k
-2 (u_s^i +\eps_i \omega^i_s)^\intercal\,\bar{n}_k
\Big \} ds \bigg],
\end{multline}
where the corresponding weight matrices are
\begin{subequations}
\begin{gather}
\mathbb{G}_k = [\Id_{n}, -H_k, -\hH_k^{\pi}]^\intercal \hQ_k [\Id_{n}, -H_k, -\hH_k^{\pi}],
\\
\mathbb{Q}_k = [\Id_{n}, -H_k, -\hH_k^{\pi}]^\intercal Q_k [\Id_{n},  -H_k, -\hH_k^{\pi}],
\\
\mathbb{N}_k = [\Id_{n}, -H_k, -\hH_k^{\pi}]^\intercal  N_k, \quad \bar{\eta}_k = [\Id_{n}, -H_k, \hH_k^{\pi}]^\intercal  Q_k \eta_k, \quad \bar{n}_k = N_k^\intercal \eta_k,\\
\hH_k^{\pi} = \left[\pi_1 \hH_k, \dots, \pi_K \hH_k \right].
\label{weightMatMinor}
\end{gather}
\end{subequations}%


\underline{\textbf{Step (ii.d)}}: To determine the optimal control $u^{i,*}_t$ for $\mc{A}_i, i\in\mfN$, first, following the lines of the proof for \Cref{thm:LQGGatDeriv}, the G\^ateaux derivative at $u^i\in\mc{U}^i$ in the direction $\omega^i\in\mc{U}^i$ of $\eqref{minorExtCost}$ is
 \begin{multline}
\langle \mcD{J_i^{\infty}(u^i)}, \omega^i \rangle = \mathbb{E} \bigg [\int_0^T (\omega_t^i)^\intercal \bigg\{ \mb{N}_k^\intercal\, X^{i}_t + R_k\, u^i_t -  \bar{n}_k
\\
+ \mb{B}_k^\intercal
\left(e^{-\tilde{A}_k^\intercal t} \,M_t^i
-\int_0^t  e^{\tilde{A}_k^\intercal(s-t) } \left( \mb{Q}_k \,X^{i}_s + \mb{N}_k\, u^i_s  - \bar{\eta}_k\right)ds \right )
\bigg  \} dt\bigg ],
 \end{multline}
where
 \begin{align}
M_t^i =  \mathbb{E} \left[ e^{\tilde{A}_k^\intercal T}\, \mb{G}_k \,X^{i}_ T +\int_0^T e^{\tilde{A}_k^\intercal s } \left( \mb{Q}_k\, X^{i}_s + \mb{N}_k\, u^i_s  - \bar{\eta}_k\right)\,ds  \, \Big | \, \mc{F}^i_t \,\right],
 \end{align}
 and, by setting $\eps_i$ to zero in \eqref{minorExtDynPert}, the unperturbed minor agent's extended state satisfies the SDE
\begin{equation}\label{minorExtDynUnpert}
 dX^{i}_t = \left( \tilde{A}_k\, X^{i}_t  + \mb{B}_k \,u^i_t  + \tilde{B}_k\, \bar{u}_t  + \tilde{M}_k(t) \right) dt + \Sigma_k \,dW^i_t.
 \end{equation}
Again since $\mc{D}J_i^{\infty}(u^i)$ has the same structural form as $\mc{D}J(u)$ in \eqref{LQGGatDerivFinal},
 according to the proof of \Cref{thm:LQGoptCntrlRaw}, the minor agent's optimal control action is
 \begin{equation}
 \begin{multlined}
 \label{cntrlAdjointMinor}
 u^{i,*}_t = - R_k^{-1} \bigg[ \mb{N}_k^\intercal\, X^{i,*}_t
 -\bar{n}_k
 \\
  \left.
 + \mb{B}_k^\intercal \left( e^{-\tilde{A}_k^\intercal t}\,M_t^i
 -\int_0^t e^{\tilde{A}_k^\intercal(s-t) }  \left(\mb{Q}_k \,X^{i,*}_s + \mb{N}_k\, u^{i,*}_s -\bar{\eta}_k\right)\,ds\right ) \right].
 \end{multlined}
 \end{equation}
 Then following the steps in the proof of \Cref{thm:optCntrlAct}, the adjoint process of $\mc{A}_i$ is given by
 \begin{equation}\label{adjointMinor}
p^i_t = e^{-\tilde{A}_k^\intercal t}\,M_t^i -\int_0^t e^{\tilde{A}_k^\intercal(s-t) }  \left(
\mb{Q}_k \,X^{i,*}_s + \mb{N}_k\, u^{i,*}_s -\bar{\eta}_k
\right)\,ds,
\end{equation}
and we adopt the ansatz
\begin{equation}\label{minorAdjointAnsatz}
p^i_t = \Pi_k(t)\, X^{i,*}_t + s_k(t),
\end{equation}
where $\Pi_k(t)$ and $s_k(t)$ are deterministic functions of time. Consequently, the control action \eqref{cntrlAdjointMinor} can be written  in the linear state feedback form
 \begin{align}\label{SextGMFGLatentSLQconvControl-minor}
 u^{i,*}_t = - R_k^{-1} \left[ \mathbb{N}_k^\intercal \, X^{i,*}_t -\bar{n}_k+  \mathbb{B}_k^\intercal \left(\Pi_k(t) \,X_t^{i,*} + s_k(t) \right) \right].
\end{align}

To specify $\Pi_k(t)$ and $s_k(t)$, we apply It\^{o}'s lemma to \eqref{adjointMinor} and use the martingale representation theorem to obtain the SDE
 \begin{equation} \label{minorAdjointDiff}
d{p}^i_t = \left[-\big(\mb{Q}_k + \tilde{A}^\intercal_k \Pi_k(t) \big)X^{i,*}_t - \tilde{A}_k^\intercal s(t) \right]dt + q_t^i dW^i_t,
\end{equation}
where $q_t^i = e^{-\tilde{A}_k^\intercal t} Z_t^i$, and $(Z_t^i)_{t\in\mfT}$ is an $\mc{F}^{i,r}$-adapted process such that $M_t^i = M^i_0 + \int_0^t Z_s^i\,dW_s^i$.

Furthermore, applying It\^{o}'s lemma to \eqref{minorAdjointAnsatz} and \eqref{minorExtDynUnpert} we obtain the SDE
\begin{equation}
 \begin{multlined}
 \label{minorAdjointAnsatzDiff}
 d{p}_t^i = \Big[\left (\dot{\Pi}_k + \Pi_k \,\tilde{A}_k - \Pi_k \,\mathbb{B}_k\, R^{-1}_k \,\mathbb{B}^\intercal_k\, \Pi_k \right) X^i_t- \Pi_k \,\mathbb{B}_k\, R^{-1}_k \mathbb{B}_k^\intercal \,s(t)
 \\
 + \Pi_k\, \tilde{M}_k(t)
 +\Pi_k \,\tilde{B}_k\,\bar{u}_t +\dot{s}_k(t)\Big] dt + \Pi_k\, {\Sigma}_k(t)\, dW^i_t.
 \end{multlined}
 \end{equation}
 
 To determine $\Pi_k(t)$ and $s_k(t)$ we need to first obtain $\bar{u}_t$.

 \noindent \textit{\textbf{(iii) Mean-field and Consistency Equations}}
 To obtain $\bar{u}_t$, we first define
 \begin{gather}
\Pi_k =
\begin{bmatrix}
\Pi_{k,11} & \Pi_{k,12} & \Pi_{k,13} \\
\Pi_{k,21} & \Pi_{k,22} & \Pi_{k,23}\\
\Pi_{k,31} & \Pi_{k,32} & \Pi_{k,33}
\end{bmatrix},\quad 
\mathbb{N}_k = \begin{bmatrix} \mathbb{N}_{k,11} \\ \mathbb{N}_{k,21} \\ \mathbb{N}_{k,31}\end{bmatrix},
\end{gather}
for all $k \in \mfK$, where $\Pi_{k,11}, \Pi_{k,22} \in \mbR^{n \times n}$, $\Pi_{k,33} \in \mbR^{nK \times nK}$, $\mathbb{N}_{k,11}, \mathbb{N}_{k,21} \in \mbR^{n \times m}$, $\mathbb{N}_{k,31} \in \mbR^{nK \times m}$.
Then average \eqref{SextGMFGLatentSLQconvControl-minor} over all $i\in\mc{I}_k$ (i.e., all agents in subpopulation $k$) to obtain
  \begin{align}\label{minorCntrlAve}
 u^{(N_k)}_t = - R_k^{-1} \left( \mb{K}^\intercal \begin{bmatrix}
 x^{(N_k)}_t \\
 x^0_t\\
 \bar{x}_t \end{bmatrix} -\bar{n}_k+\mb{B}_k^\intercal s_k \right).
 \end{align}
 where $\mb{K}:=\mathbb{N}_k + \begin{bmatrix} \Pi_{k,11} & \Pi_{k,12} & \Pi_{k,13} \end{bmatrix}^\intercal B_k $.
 Next \eqref{minorCntrlAve} as $N_k \rightarrow \infty$ converges in quadratic mean to 
 \begin{align}\label{minorCntrlMF}
 \bar{u}^{k}_t = - R_k^{-1} \left(
 \mb{K}^\intercal \begin{bmatrix}
 \bar{x}^{k}_t \\
 x^0_t\\
 \bar{x}_t
 \end{bmatrix} -\bar{n}_k+\mb{B}_k^\intercal s_k \right).
 \end{align}
 
 From \eqref{MF_perturbed_eps0}, the (unperturbed) mean-field equation is given by 
 \begin{equation}\label{MF_uperturbed}
d\bar{x}_t = \left(\br{A}\,\bar{x}_t + \br{G}\, x^{0}_t + \br{B}\, \bar{u}_t + \br{m}(t)\right)dt.
\end{equation}

We substitute \eqref{minorCntrlMF} in the above equation to get 
\begin{equation}\label{MF_uperturbed}
d\bar{x}_t = \left(\bar{A}\,\bar{x}_t + \bar{G}\, x^{0}_t + \bar{m}(t)\right)dt,
\end{equation}
where 
\begin{equation}
\bar{A} = \begin{bmatrix} \bar{A}_1\\ \vdots\\\bar{A}_K\end{bmatrix}, \quad 
\bar{G} = \begin{bmatrix} \bar{G}_1\\ \vdots\\ \bar{G}_K\end{bmatrix}, \quad 
\bar{m} = \begin{bmatrix} \bar{m}_1\\ \vdots \\ \bar{m}_K \end{bmatrix},
\end{equation}
and for $k \in \mfK$
\begin{align}
&\bar{A}_k  = \left[A_k - B_k R_k^{-1} (\mb{N}_{k,11}^\intercal + B_k^\intercal \Pi_{k,11})\right] \mathbf{e}_k\nonumber\\ 
& \qquad \qquad \qquad \qquad \qquad \quad+ F^{\pi}_k - B_k R_k^{-1} ( \mb{N}_{k,31}^\intercal+ B_k^\intercal \Pi_{k,13}),\\
&\bar{G}_k  = G_k -B_k R_k^{-1}(\mb{N}_{k,21}^\intercal + B_k^\intercal \Pi_{k,12}), \\
&\bar{m}_k = b_k+B_kR_k^{-1}\bar{n}_k-B_k R_k^{-1} \mathbb{B}_k^\intercal s_k.
\end{align}

Subsequently, substituting \eqref{minorCntrlMF} into \eqref{majorAdjointAnsatzDiff} and equating \eqref{majorAdjointAnsatzDiff} with \eqref{majorAdjointDiff} results in
 \begin{equation}\label{SextGMFGlatentLQGriccati}
\left\{
\begin{aligned}
-\dot{\Pi}_0
&= \Pi_0 \mb{A}_0 + \mb{A}_0^\intercal \Pi_0 - (\Pi_0 \mb{B}_0 + \mb{N}_0)R_0^{-1}( \mb{B}_0^\intercal \Pi_0 + \mb{N}_0^\intercal ) + \mb{Q}_0,\\
\Pi_0(T) &= \mb{G}_0,
\end{aligned}
\right.
\end{equation}
 \begin{equation} \label{SextGMFGLatentLQGoffset}
\left\{
\begin{aligned}
-\dot{s}_0 =&  \left[\left(\mb{A}_0- \mb{B}_0\,R_0^{-1} \,\mb{N}_0^\intercal\right)^\intercal- \Pi_0\, \mb{B}_0\, R_0^{-1} \,\mb{B}_0^\intercal \right] \,s_0
\\
&\quad+ \Pi_0\left(\mb{M}_0+ \mb{B}_0\,R_0^{-1}\,\bar{n}_0\right)
+\mb{N}_0\, R_0^{-1}\,\bar{n}_0-\bar{\eta}_0,
\\
s_0(T) =& 0,
\end{aligned}
\right.
\end{equation}
where
\begin{gather}
\mb{A}_0 = \begin{bmatrix}
A_0 & F_0^{\pi}\\
\bar{G} & \bar{A}
\end{bmatrix}, 
\quad
\mb{M}_0 = \begin{bmatrix} b_0 \\ \bar{m} \end{bmatrix}.
\end{gather}

Moreover, substituting \eqref{minorCntrlMF} in \eqref{minorAdjointAnsatzDiff}, and equating \eqref{minorAdjointAnsatzDiff} with \eqref{minorAdjointDiff} gives
 \begin{equation}\label{SextGMFGlatentLQGriccati-minor}
 \left\{
\begin{array}{rl}
-\dot{\Pi}_k \!\!\!&= \Pi_k \mb{A}_k + \mb{A}_k^\intercal \Pi_k - (\Pi_k \mb{B}_k + \mb{N}_k)R_k^{-1}( \mb{B}_k^\intercal \Pi_k + \mb{N}_k^\intercal ) + \mb{Q}_k,\\
\Pi_k(T) \!\!\!&= \mb{G}_k,
\end{array}
\right.
\end{equation}
 \begin{equation} \label{SextGMFGLatentLQGoffset-minor}
 \left\{
\begin{array}{rcl}
- \dot{s}_k \!\!\! &=& \!\!\! [(\mb{A}_k- \mb{B}_kR_k^{-1} \mb{N}_k^\intercal)^\intercal- \Pi_k \mb{B}_k R_k^{-1} \mb{B}_k^\intercal ] s_k \\
\!\!\! && \!\!\! \qquad+ \Pi_k( \mathbb{M}_k+\mb{B}_kR_k^{-1}\bar{n}_k)+\mb{N}_kR_k^{-1}\bar{n}_k- \bar{\eta}_k,\\
s_k(T) \!\!\! &=& \!\!\!  0,
\end{array}
\right.
\end{equation}
where
\begin{gather}
\mb{A}_k = \begin{bmatrix} A_k & [G_k \, \, \, F_k^{\pi}]\\ 0 &\mb{A}_0- \mathbb{B}_0 R^{-1}_0 \mathbb{N}_0-\mb{B}_0 R_0^{-1} \mb{B}_0^\intercal \Pi_0 \end{bmatrix}, \quad
 \mb{M}_k = \begin{bmatrix} b_k \\ \mb{M}_0-\mb{B}_0 R_0^{-1} \mb{B}_0^\intercal s_0 \end{bmatrix}.
\end{gather}

Therefore, the set of major minor mean-field fixed-point equations determining $\bar{A}$, $\bar{G}$, $\bar{m}$ (known as the consistency equations) can be written compactly as
\begin{gather}\label{ConsistencyEq1}
\left\{
\begin{array}{rcll}
 -\dot{\Pi}_0 \!\!\! &=& \!\!\!  \Pi_0 \mb{A}_0 + \mb{A}_0^\intercal \Pi_0 - (\Pi_0 \mb{B}_0+ \mb{N}_0)R_0^{-1}( \mb{B}_0^\intercal \Pi_0 + \mb{N}_0^\intercal ) + \mb{Q}_0 ,
\quad 
\Pi_0(T) = \mb{G}_0,
\\
-\dot{\Pi}_k \!\!\! &=& \!\!\!  \Pi_k \mb{A}_k + \mb{A}_k^\intercal \Pi_k - (\Pi_k \mb{B}_k+ \mb{N}_k)R_k^{-1}( \mb{B}_k^\intercal \Pi_k + \mb{N}_k^\intercal ) + \mb{Q}_k,
\quad \!\!
\Pi_k(T) = \mb{G}_k,
\\
\bar{A}_k \!\!\! &=& \!\!\! \left[A_k - B_k R_k^{-1} (\mb{N}_{k,11}^\intercal + B_k^\intercal \Pi_{k,11})\right] \mathbf{e}_k + F^{\pi}_k - B_k R_k^{-1} ( \mb{N}_{k,31}^\intercal+ B_k^\intercal \Pi_{k,13})\,,
 \\
\bar{G}_k  \!\!\! &=& \!\!\!G_k -B_k R_k^{-1}(\mb{N}_{k,21}^\intercal + B_k^\intercal \Pi_{k,12})\,,
\end{array}
\right.
\end{gather}
\begin{gather}\label{ConsistencyEq2}
\left\{
\begin{array}{rclr}
-\dot{s}_0 \!\!\! &=& \!\!\!  [(\mb{A}_0- \mb{B}_0R_0^{-1} \mb{N}_0^\intercal)^\intercal- \Pi_0 \mb{B}_0 R_0^{-1} \mb{B}_0^\intercal ] s_0
\\
& & \quad \quad\quad + \Pi_0(\mb{M}_0+ \mb{B}_0R_0^{-1}\bar{n}_0)+\mb{N}_0 R_0^{-1}\bar{n}_0-\bar{\eta}_0,
&
s_0(T) = 0,
\\
-\dot{s}_k \!\!\! &=& \!\!\!  [(\mb{A}_k- \mb{B}_kR_k^{-1} \mb{N}_k^\intercal)^\intercal- \Pi_k \mb{B}_k R_k^{-1} \mb{B}_k^\intercal ] s_k \\
& & \quad \quad \quad
+ \Pi_k( \mathbb{M}_k+\mb{B}_kR_k^{-1}\bar{n}_k)+\mb{N}_kR_k^{-1}\bar{n}_k- \bar{\eta}_k,
& s_k(T) = 0,
 \\
\bar{m}_k \!\!\! &=& \!\!\!  b_k+B_kR_k^{-1}\bar{n}_k-B_k R_k^{-1} \mathbb{B}_k^\intercal s_k,
\end{array}
\right.
\end{gather}
for all $k\in\mfK$.
\end{proof}

\subsection{Infinite-Horizon LQG MFG Systems}
For infinite horizon LQG MFG systems where the terminal time is set equal to infinity, and hence the terminal cost turns to zero, the major agent's infinite horizon cost functionals is
\begin{equation} \label{MajorCostLrgPopInfHorizon}
J_{0}^{N}(u^0, u^{-0}) = \tfrac{1}{2} \mathbb{E} \Big[ \int_{0}^{\infty}
\!\!\! e^{- \rho t} \Big \lbrace \Vert x^0_t - \Phi(x^{(N)}_t)\Vert ^2_{Q_0}
+ 2{\big(x_t^0-\Phi(x^{(N)}_t)\big)}^{\intercal} N_0 u_t^0+ \Vert u^0_t \Vert_{R_0}^2 \Big \rbrace dt \Big].
\end{equation}
Similarly, the discounted infinite horizon cost functional for  $\mc{A}_i,\,i\in \mfN$, is given by
\begin{equation}
J_{i}^{N}(u^i, u^{-i}) = \tfrac{1}{2}\mathbb{E} \Big[ \int_{0}^{\infty}\!\!\! e^{-\rho t} \Big \lbrace \Vert x^i_t - \Psi(x^{(N)}_t) \Vert_{Q_k}^{2}
+ 2{\big(x_t^i- \Psi(x^{(N)}_t)\big)}^{\intercal} N_k u_t^i + \Vert u^i_t \Vert_{R_k}^2 \Big\rbrace dt \Big].  \label{MinorCostFinitePopInfHorizon}
\end{equation}
The major-minor agent dynamics in \eqref{MajorDyn}-\eqref{MinorDyn} remain the same in the infinite horizon LQG MFG systems.

For the results of \Cref{thm:MMLQGMFG} to hold for the infinite-horizon LQG MFG systems, we add the following to \textit{Assumptions \ref{IntialStateAss}-\ref{convexityCondMinorLQGMFG}}. 

\begin{assumption} \label{MFEquationSolAss-infHorizon}
The parameters in \eqref{MajorDyn}-\eqref{MinorDyn} and \eqref{MajorCostLrgPopInfHorizon}-\eqref{MinorCostFinitePopInfHorizon} belong to a non-empty set which yields the existence and uniqueness of the solutions ($\Pi_0$, $s_0$, $\Pi_k$, $s_k$, $\bar{A}_k$, $\bar{G}_k$, $\bar{m}_k$) to the resulting set of major-minor mean field fixed-point equations, and for which the matrices 
\begin{align}
&\mb{A}_0 - \mb{B}_0 R_0^{-1}\mb{B}_0^\intercal \Pi_0 - \tfrac{\rho}{2}\Id_{n+nK},\\
&\mb{A}_k - \mb{B}_k R_k^{-1}\mb{B}_k^\intercal \Pi_k - \tfrac{\rho}{2}\Id_{2n+nK}, \quad k \in \mfK,
\end{align}
are asymptotically stable, and 
\begin{equation}
\sup_{t\geq 0, k \in \mfK} e^{-\tfrac{\rho}{2}t}\left( |s_0(t) | + |s_k(t)| + |\bar{m}_k(t)| \right) < \infty.
\end{equation}
 \end{assumption}

\begin{assumption}\label{ass:detectability_MF}
The pair $(L_{0}, \mathbb{A}_0 - (\rho /2)\Id_{n+nK})$ is detectable, and for each $k \in \mfK  $, the pair $(L_{k}, \mathbb{A}_k - (\rho /2)\Id_{2n+nK})$ is detectable, where $L_0 = Q_0^{1/2}[\Id_n, -H_0^\pi]$ and $L_k = Q^{1/2}_k[\Id_n, -H_k, -\hH_k^\pi]$. \end{assumption}
\begin{assumption}\label{ass:stabilizability_MF}
The pair $(\mathbb{A}_0-(\rho/2)\Id_{n+nK}, \mathbb{B}_0)$ is stabilizable and $(\mathbb{A}_k-(\rho/2)\Id_{2n+nK}, \mathbb{B}_k)$ is stabilizable for each $k\in \mfK  $.
\end{assumption}
Then for the major agent's system \eqref{MajorDyn}, \eqref{MajorCostLrgPopInfHorizon}, the best response strategy is given by \eqref{SextGMFGLatentSLQconvControl}, where the Riccati matrix $\Pi_0$ satisfies an algebraic Riccati equation given by
\begin{equation}\label{majorRicInfhorizon}
\rho \Pi_0= \Pi_0 \mb{A}_0 + \mb{A}_0^\intercal \Pi_0 - (\Pi_0 \mb{B}_0 + \mb{N}_0)R_0^{-1}( \mb{B}_0^\intercal \Pi_0 + \mb{N}_0^\intercal) + \mb{Q}_0,\\
\end{equation}
and the offset vector $s_0$ satisfies the differential equation
\begin{multline}
\rho s_0(t)=\dot{s}_0(t) +  [(\mb{A}_0- \mb{B}_0R_0^{-1} \mb{N}_0^\intercal)^\intercal- \Pi_0 \mb{B}_0 R_0^{-1} \mb{B}_0^\intercal ] s_0(t)\\+ \Pi_0(\mb{M}_0(t)+ \mb{B}_0R_0^{-1}\bar{n}_0)+\mb{N}_0 R_0^{-1}\bar{n}_0-\bar{\eta}_0.\end{multline}

 Similarly, for $\mc{A}_i$'s system  \eqref{MinorDyn}, \eqref{MinorCostFinitePopInfHorizon}, $i \in \mfN$, the best response strategy is given by \eqref{SextGMFGLatentSLQconvControl-minor}, where the Riccati matrix $\Pi_k$ and the offset matrix $s_k$ satisfy the following equations.
\begin{equation}\label{minorODEInfHorizon}
\left\{
\begin{array}{rclr}
\rho {\Pi}_k
\!\!\!&=&\!\!\! \Pi_k \mb{A}_k + \mb{A}_k^\intercal \Pi_k - (\Pi_k \mb{B}_k+ \mb{N}_k)R_k^{-1}( \mb{B}_k^\intercal \Pi_k + \mb{N}_k^\intercal ) + \mb{Q}_k,
\\
\rho s_k(t)
\!\!\!&=&\!\!\!
\dot{s}_k(t) +  [(\mb{A}_k- \mb{B}_kR_k^{-1} \mb{N}_k^\intercal)^\intercal- \Pi_k \mb{B}_k R_k^{-1} \mb{B}_k^\intercal ] s_k(t)
\\
& & \qquad \qquad + \Pi_k( \mathbb{M}_k(t)+\mb{B}_kR_k^{-1}\bar{n}_k)+\mb{N}_kR_k^{-1}\bar{n}_k- \bar{\eta}_k,

\end{array}
\right.
\end{equation}
for all $k\in\mfK$.

Following the lines of the proof of \Cref{thm:MMLQGMFG}, subject to \textit{Assumptions \ref{IntialStateAss}-\ref{ass:stabilizability_MF}} it can be shown that the set of strategies $\mc{U}_{MF}^{N} \triangleq \{ {u}^{i,*};  i \in \mfN_0 \}$, $1\leq N < \infty$, given by \eqref{SextGMFGLatentSLQconvControl}, \eqref{SextGMFGLatentSLQconvControl-minor}, and \eqref{majorRicInfhorizon}-\eqref{minorODEInfHorizon}, yields an $\epsilon$-Nash equilibrium for the infinite-horizon LQG MM MFG Systems.



\section*{Acknowledgment}
The authors would like to thank Minyi Huang for his productive comments concerning this work.
 \bibliographystyle{elsarticle-num} 
 \bibliography{bib_Dena_10June20}


\end{document}